\def\calA{{\cal A}}
\def\calB{{\cal B}}
\def\calC{{\cal C}}
\def\calP{{\cal P}}
\def\calR{{\cal R}}
\def\calS{{\cal S}}
\def\calU{{\cal U}}
\def\calX{{\cal X}}
\def\tR{{\tilde R}}
\def\tU{{\tilde U}}
\def\tX{{\tilde X}}
\newcommand{\given}{\mid}
\newtheorem{theorem}{Theorem}
\newtheorem{corollary}[theorem]{Corollary}
\newtheorem{proposition}[theorem]{Proposition}
\newtheorem{definition}{Definition}
\newtheorem{example}{Example}
\begin{document}

\title{Error Free Perfect Secrecy Systems}
\author{
\thanks{The material in this paper was presented in part
at the IEEE International Symposium on Information Theory,
St. Petersburg, July, 2011.  The work of S.-W. Ho was supported by the Australian
    Research Council under an Australian Postdoctoral Fellowship as
    part of Discovery Project DP1094571. The work of T. H. Chan and
    A. Grant was also supported in part by ARC Discovery Project
    DP1094571. }
\authorblockN{Siu-Wai Ho, Terence H. Chan, Alex Grant and Chinthani Uduwerelle}\\
\authorblockA{Institute for Telecommunications Research\\
University of South Australia
}}

\maketitle
\begin{abstract}
  Shannon's fundamental bound for perfect secrecy says that the
  entropy of the secret message cannot be larger than the entropy of
  the secret key initially shared by the sender and the legitimate
  receiver. Massey gave an information theoretic proof of this result,
  however this proof does not require independence of the key and
  ciphertext. By further assuming independence, we obtain a tighter
  lower bound, namely that the key entropy is not less than the
  logarithm of the message sample size in any cipher achieving perfect
  secrecy, even if the source distribution is fixed. The same bound
  also applies to the entropy of the ciphertext. The bounds still hold
  if the secret message has been compressed before encryption.

  This paper also illustrates that the lower bound only gives the
  minimum size of the pre-shared secret key. When a cipher system is
  used multiple times, this is no longer a reasonable measure for the
  portion of key consumed in each round. Instead, this paper proposes
  and justifies a new measure for key consumption rate. The existence
  of a fundamental tradeoff between the expected key consumption and
  the number of channel uses for conveying a ciphertext is
  shown. Optimal and nearly optimal secure codes are designed.
\end{abstract}

\begin{keywords}
  Shannon theory, information-theoretic security, perfect secrecy,
  joint source-encryption coding, one-time pad.
\end{keywords}

\newpage
\section{Introduction}
\label{se:intro}
Cipher systems with \emph{perfect secrecy} were studied by Shannon in
his seminal paper~\cite{shannon1949communication} (see
also~\cite{massey1992}). With reference to Figure~\ref{fig:encdec}, a
cipher system is defined by three components: a source message $U$, a
ciphertext $X$ and a key $R$. The key is secret common randomness
shared by the sender and the legitimate receiver. The sender encrypts
the message $U$, together with the key $R$, into the ciphertext
$X$. This ciphertext will be transmitted to the legitimate receiver
via a public channel. A cipher system is \emph{perfectly secure}, or
equivalently, satisfies a perfect secrecy constraint if the message
$U$ and the ciphertext $X$ are statistically independent, $I(U; X) =
0$. In this case, an adversary who eavesdrops on the public channel
and learns $X$ (but does not have $R$) will not be able to infer any
information about the message $U$. On the other hand, the legitimate
receiver decrypts the message $U$ from the received ciphertext $X$
together with the secret key $R$. A cipher system is \emph{error-free}
(i.e., the probability of decoding error is zero) if $H(U \given X R)
= 0$.
\begin{figure}[htbp]
  \begin{center}
    \includegraphics[scale=0.8]{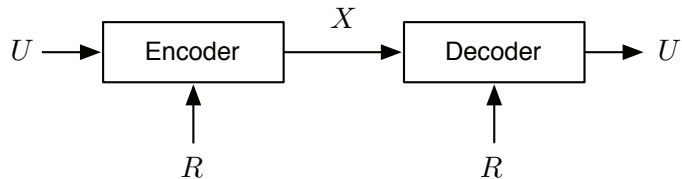} \caption{\label{fig:encdec}
      A cipher system.}
  \end{center}
\end{figure}

By considering a \emph{deterministic cipher}, where $X$ is a
deterministic function of $R$ and $U$, Shannon showed that the number
of messages is equal to the number of possible ciphertexts, and that
the number of different keys is not less than the number of
messages~\cite[p. 681]{shannon1949communication},
\begin{equation*}
|\calX| = |\calU| \leq |\calR|,
\end{equation*}
where $\calX$, $\calU$ and $\calR$ are the respective supports of $X$,
$U$ and $R$.  In order to design a perfectly secure cipher system
protecting a source with \emph{unknown source distribution} $P_U$,
Shannon argued that
\begin{equation}
  H(R) \geq \log |\calU| \geq H(U) .  \label{eq:shannonB0}
\end{equation}
He also made an important observation~\cite[p. 682]{shannon1949communication} that
\begin{quote}
  ``\emph{the amount of of uncertainty we can introduce into the
    solution cannot be greater than the key uncertainty}"
\end{quote}
In other words,
\begin{equation}
  H(R) \geq H(U).   \label{eq:shannonB}
\end{equation}
Massey~\cite{massey1992} called~\eqref{eq:shannonB} Shannon's
fundamental bound for perfect secrecy, and gave an information
theoretic proof for this result.  It is important to note that
Massey's proof~\cite{massey1992} does not require $U$ and $R$ to be
statistically independent.

Now, suppose $U$ and $R$ are indeed independent (which is common in
practice).  Our first main result, Theorem~\ref{th:BoundonXR}
improves~\eqref{eq:shannonB}, showing that for any source distribution
$P_U$,
\begin{equation}
  P_R(r) \leq |\calU|^{-1}, \quad \forall r. \label{eq:3}
\end{equation}
As a consequence, we prove  that for any
cipher achieving perfect secrecy, the logarithm of the message sample
size cannot be larger than the entropy of the secret key,
\begin{equation}
  H(R) \geq \log |\calU|.  \label{eq:firstresult}
\end{equation}
Comparing with the first inequality in~\eqref{eq:shannonB0}, we see
that~\eqref{eq:firstresult} is valid even if the source distribution
$P_U$ is fixed and known.

This paper is based on the model in Fig.~\ref{fig:encdec}.  Despite
its apparent simplicity, this is the most general encoder possible,
and covers many interesting special cases. For example, suppose the
distribution of $U$ is non-uniform.  One may expect that the optimal
encoder will operate according to Fig.~\ref{fig:comencry}, by first
compressing $U$ and then encrypting the compressed output.
\begin{figure}[htbp]
  \begin{center}
    \includegraphics[scale=0.8]{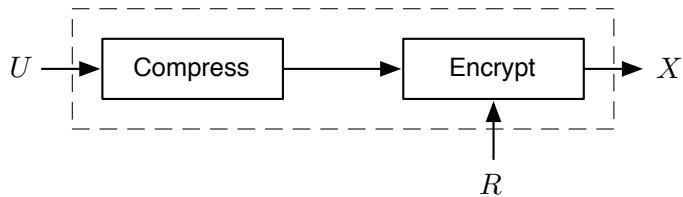} \caption{\label{fig:comencry}
      Compression before encryption.}
  \end{center}
\end{figure}

Roughly speaking, compression converts the source into a sequence of
independent and identically distributed (i.i.d.) symbols.
Theoretically, this can maximize the adversary's decoding error
probability in some systems~\cite[Theorem~3]{ReliabilityISIT}.
Practically, the compressed output has a smaller file size and hence
seem to require less key for encryption. This approach of compression
before encryption was also proposed by Shannon
\cite[p. 682]{shannon1949communication}.  In fact, Shannon believed
that, after removing redundancy in the source,
\begin{quote}
  ``\emph{a bit of key completely conceals a bit of message
    information}".
\end{quote}
However a separated compression before encryption model is a special
case of our more general model in Fig.~\ref{fig:encdec}. To certain
extent, our model can be viewed as joint compression-encryption
coding. Naturally, our results also apply to models such as
Fig.~\ref{fig:comencry}, for which we will later prove
\begin{equation*}
  H(X) \geq \log |\calU|.
\end{equation*}
This result, together with~\eqref{eq:firstresult}, in fact suggest
that compression before encryption may not be useful if both perfect
secrecy and error-free decoding are required.

Another major contribution of this paper is the introduction of a new
concept of \emph{expected key consumption} $I(R; U X)$.  Previously in
the literature, the amount of key required in a cipher system has been
measured by the entropy of the common secret key.  We will argue
in this paper that $H(R)$ is only valid for measuring the
\emph{initial key requirement}, by which we mean the amount of secret
randomness that must be shared between the sender and the legitimate
receiver, prior to transmission of the ciphertext. Instead, key
consumption should be measured by $I(R; U X)$. This new measure offers
more insights, and in the second part of this paper, we will design
efficient cipher system that can be used multiple times, where $I(R; U
X)$ is one of the system parameters to be optimised.

Besides expected key consumption, we also want to minimize the
\emph{number of channel uses} required to transmit the ciphertext $X$
from the source to the legitimate receiver.  Naturally, we can encode
the ciphertext $X$ using a Huffman code~\cite{bk:Cover}.  Let $\lambda(X)$ be the codeword length.  In this
case, the expected codeword length $\mathbf{E}[\lambda(X)]$ satisfies $H(X) \leq
\mathbf{E}[\lambda(X)] \leq H(X) + 1$.  Note that for two random variables $X$ and
$X'$, it is possible that $H(X) < H(X')$, but $\mathbf{E}[\lambda(X)] > \mathbf{E}[\lambda(X)]$.
One example is when $P_{X} = (0.3, 0.23, 0.2, 0.17, 0.1)$ and $P_{X'}
= (0.25,0.25,0.25,0.15, 0.1)$.  However, we still use $H(X)$ instead
of $\mathbf{E}[\lambda(X)]$ as a measure for the number of channel uses required in a
cipher system for two reasons: first, $H(X)$ is a lower bound for
$\mathbf{E}[\lambda(X)]$ and in fact a very good estimate for $\mathbf{E}[\lambda(X)]$; second, the
problem itself is more tractable when using $H(X)$, instead of
$\mathbf{E}[\lambda(X)]$.

We will show that there exists a fundamental tradeoff between the
expected key consumption and the number of channel uses. In fact, if
the source distribution is not uniform, then the minimum expected key
consumption and the minimum number of channel uses cannot be
simultaneously achieved.  We will also show that code design achieving
minimum expected key consumption depends on whether the source
distribution $P_U$ has irrational probability masses or not. Optimal
code will be proposed for $P_U$ which has only rational probability
masses.

\emph{Organization:} In Section~\ref{se:SingleUse}, we consider
one-shot systems, where there is a single message to be securely
transmitted. We formalize the system model, and new bounds on $H(R)$
and $H(X)$ will be derived.  In Section~\ref{se:MultUse}, we will
consider the case where cipher system is used multiple times.  New
system parameters including $I(R; UX)$ will be defined and justified.
Section~\ref{se:Tradeoff} will focus on two regimes corresponding to
minimal expected key consumption and minimal number of channel uses.
The existence of a fundamental non-trivial tradeoff will be
illustrated.  In Section~\ref{se:ComEnc}, the performance of
compression-before-encryption will be evaluated.

\emph{Notation.} Random variables are denoted by capital letters, e.g.
$X$, and their particular realizations are denoted by small letters,
$x$.  Supports of random variables are denoted by calligraphic
letters, $\calX$.

\section{Key Requirements for One-Shot Ciphers}\label{se:SingleUse}

\begin{definition}[Error free perfect secrecy system]
\label{de:eps}
A cipher system $(R, U, X)$ is called an \emph{Error-free
  Perfect-Secrecy} (EPS) system if
\begin{align}
  I(U; X) &= 0, \label{eq:IUX0} \\
  H(U\given R X) &= 0, \label{eq:HURX0}\\
  I(U; R) &= 0. \label{eq:IUR0}
\end{align}
\end{definition}
Here,~\eqref{eq:IUX0} ensures perfect secrecy, via independence of the
ciphertext $X$ and source message $U$. An eavesdropper learning $X$
can infer no information about the message $U$.  The
constraint~\eqref{eq:HURX0} ensures that
the receiver can reconstruct $U$ from $R$ and $X$ without error. Finally~\eqref{eq:IUR0} requires that the shared secret key $R$ is
independent of the message $U$.

The constraints~\eqref{eq:IUX0} and~\eqref{eq:HURX0} were originally
used in~\cite{massey1992} to prove Shannon's fundamental
bound~\eqref{eq:shannonB} for perfect secrecy.  The only additional
constraint in Definition~\ref{de:eps} is~\eqref{eq:IUR0}.  In
practice, $R$ is usually shared prior to the independent generation of
the message $U$.  This is a strong practical motivation
for~\eqref{eq:IUR0}.  Furthermore, Definition~\ref{de:eps} admits the
general case of probabilistic encoding.  For the receiver, it is
however sufficient to consider deterministic decoding since
by~\eqref{eq:HURX0}, $U$ is a function of $R$ and $X$.  In other
words, there exists a decoding function $g$ such that
\begin{equation}
  P_{URX}(u, r, x) = P_{RX}(r, x) \mathbf{1} \{u = g(r, x)\}. \label{eq:cov1}
\end{equation}

\begin{theorem}[Lower bounds on $H(X)$ and $H(R)$]
  \label{th:BoundonXR}
  Let $(R, U, X)$ be an error free prefect secrecy system, satisfying
  \eqref{eq:IUX0} -- \eqref{eq:IUR0} according to
  Definition~\ref{de:eps}, and suppose $P_U$ is known. Then
  \begin{align}
    \max_{x\in\calX} P_{X}(x) & \le   |\calU|^{-1}, \label{eq:thm1px} \\
    \intertext{and}
    \max_{r\in\calR} P_{R}(r) & \le   |\calU|^{-1}, \label{eq:thm1pr}
  \end{align}
  where $\calU$ is the support of the message $U$.  Consequently,
  \begin{equation}
    \log |\calU| \le H(X) , \label{eq:thm1hx}
  \end{equation}
  with equality if and only if $P_X(x) = |\calU|^{-1}$ for all $x \in
  \calX$.  Also,
  \begin{equation}
    \log |\calU| \le H(R), \label{eq:thm1hr}
  \end{equation}
  with equality if and only if $P_R(r) = |\calU|^{-1}$ for all $r \in
  \calR$.  If the source distribution is not uniform, $H(X)$ and
  $H(R)$ are strictly greater than $H(U)$.
\end{theorem}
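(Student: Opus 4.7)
\emph{Proof plan.} The plan is to leverage the decoder representation \eqref{eq:cov1} together with both independence constraints $I(U;X)=0$ and $I(U;R)=0$ to derive a pair of structural identities, and then apply a partition argument to bound $P_X$ and $P_R$ uniformly. By \eqref{eq:cov1} there is a function $g:\calR\times\calX\to\calU$ with $U=g(R,X)$ almost surely, which we extend arbitrarily to a total function taking values in $\calU$. Marginalising the joint and using $I(U;X)=0$ yields
\begin{equation*}
  P_U(u)\,P_X(x) \;=\; P_{UX}(u,x) \;=\; \sum_{r:\, g(r,x)=u} P_{RX}(r,x),
\end{equation*}
while $I(U;R)=0$ analogously gives
\begin{equation*}
  P_U(u)\,P_R(r) \;=\; P_{UR}(u,r) \;=\; \sum_{x:\, g(r,x)=u} P_{RX}(r,x).
\end{equation*}

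The key step is to upper bound each summand using the \emph{other} independence constraint. When $g(r,x)=u$, we have $P_{RX}(r,x) = P_{RXU}(r,x,u) \le P_{RU}(r,u) = P_R(r)\,P_U(u)$ by $I(U;R)=0$. Substituting into the first identity and cancelling $P_U(u)>0$ gives, for each $u\in\calU$,
\begin{equation*}
  P_X(x) \;\le\; \sum_{r:\, g(r,x)=u} P_R(r) \;=\; \Pb(R \in \calR_u^x),
\end{equation*}
where $\calR_u^x := \{r\in\calR : g(r,x)=u\}$. Since $\{\calR_u^x\}_{u\in\calU}$ partitions $\calR$, summing the $|\calU|$ such inequalities yields $|\calU|\cdot P_X(x) \le 1$, establishing \eqref{eq:thm1px}. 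Inequality \eqref{eq:thm1pr} follows by the symmetric argument: bound each summand in the second identity by $P_X(x)\,P_U(u)$ via $I(U;X)=0$, then sum over $u\in\calU$.

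For the entropy bounds \eqref{eq:thm1hx} and \eqref{eq:thm1hr}, observe that $-\log P_X(x) \ge \log|\calU|$ pointwise on the support, so $H(X) = \E[-\log P_X(X)] \ge \log|\calU|$, with equality iff $P_X$ is uniform on a set of size $|\calU|$; the argument for $H(R)$ is identical. When $P_U$ is non-uniform, $H(U) < \log|\calU| \le H(X), H(R)$, yielding the strict inequality. The only delicate point will be the partition step: one must ensure that the extended $g$ takes values in $\calU$ on all of $\calR\times\calX$ so that $\sum_{u\in\calU} \Pb(R\in\calR_u^x) = 1$, which is harmless since $g$'s values off the support of $(R,X)$ affect no joint probability appearing in the argument.
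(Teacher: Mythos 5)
Your proof is correct, and its core is the same computation as the paper's: both use the decoder representation \eqref{eq:cov1} to write $P_{UX}(u,x)$ as a sum of $P_{RX}(r,x)$ over $\{r: g(r,x)=u\}$, bound each summand through the other independence constraint ($P_{RX}(r,x)=P_{URX}(u,r,x)\le P_{UR}(u,r)=P_U(u)P_R(r)$, which is exactly the paper's step \eqref{eq:cov2}--\eqref{eq:cov3} without naming the conditional $P_{X\given UR}$), and then exploit disjointness of the sets $\calR_u^x$ over $u\in\calU$ to conclude $|\calU|\,P_X(x)\le 1$; the only difference is bookkeeping (a per-$u$ inequality summed over $u$, versus the paper's single chain starting from $|\calU|P_X(x)=\sum_u P_{X\given U}(x\given u)$). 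Where you genuinely diverge is the passage from \eqref{eq:thm1px} to \eqref{eq:thm1hx}: the paper invokes majorization and the external result \cite[Theorem~10]{FanoJ} to get $H(P_X)\ge H(P_B)+D(P_B\Vert P_X)$, whereas you use the elementary pointwise bound $-\log P_X(x)\ge\log|\calU|$ on the support and take expectations. Your route is simpler and self-contained, works equally well for countably infinite $\calX$, and delivers the same equality condition ($P_X(x)=|\calU|^{-1}$ on all of $\calX$, forcing $|\calX|=|\calU|$); what the paper's heavier machinery buys is the extra divergence term $D(P_B\Vert P_X)$, which quantifies the gap but is not needed for the theorem as stated. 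Your handling of the extension of $g$ off the support is also fine --- disjointness of the $\calR_u^x$ alone already gives $\sum_{u\in\calU}\Pb(R\in\calR_u^x)\le 1$, which is all the argument requires.
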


\begin{proof}
  For any $x \in \calX$,
  \begin{align}
    |\calU|P_{X}\left(x\right)
    &= \sum_u P_{X}\left(x\right) \label{eq:cov-1} \\
    &= \sum_u P_{X\given U}\left(x \given u\right)  \label{eq:cov0}\\
    &= \sum_u \sum_{r: P_{URX}\left(u, r, x\right) > 0}
       \frac{P_{URX}\left(u, r, x\right)}{ P_{U}\left(u\right)} \\
    &= \sum_u \sum_{r: P_{URX}\left(u, r, x\right) > 0}
       \frac{P_{RX}\left(r, x\right) \mathbf{1}\, \{u = g\left(r, x\right)\}}
       {P_{U}\left(u\right)}  \label{eq:cov01}\\
    &= \sum_{r: P_{RX}\left(r, x\right) > 0}
       \frac{P_{RX}\left(r, x\right) } {P_{U}\left(g\left(r,x\right)\right)} \\
    &= \sum_{r: P_{RX}\left(r, x\right) > 0} P_{RX}\left(r, x\right)
       \frac{P_{X \given UR}\left(x\given g\left(r,x\right), r\right)\,
        P_{R}\left(r\right)}{P_{URX}\left(g\left(r,x\right), r, x\right)}
     \label{eq:cov2}\\
    &= \sum_{r: P_{RX}\left(r, x\right) > 0}
       P_{X\given UR}\left(x\given g\left(r,x\right), r\right)
       P_{R}\left(r\right) \label{eq:cov3}\\
    &\leq \sum_{r: P_{RX}\left(r, x\right) > 0}  P_{R}\left(r\right)\\
    &\leq 1, \label{eq:cov4}
  \end{align}
  where \eqref{eq:cov0}, \eqref{eq:cov01}, \eqref{eq:cov2} and
  \eqref{eq:cov4} follow from \eqref{eq:IUX0}, \eqref{eq:cov1},
  \eqref{eq:IUR0} and \eqref{eq:cov1}, respectively.  This
  establishes~\eqref{eq:thm1px}.

  Let $P_B$ be a uniform distribution with support $\calU$. Since
  $P_{X}$ is always majorized\footnote{A good introduction to
    majorization theory can be found in \cite{bk:majorization}. In
    this proof, we just need the definition of ``majorized by" which
    can also be found in \cite[Definition~1]{FanoJ}} by $P_B$ from
  \eqref{eq:thm1px}, \cite[Theorem~10]{FanoJ} shows that
  \begin{align}
    H\left(P_X\right) &\geq H\left(P_B\right) + D\left(P_B \Vert P_X\right) \\
    &\geq H\left(P_B\right)  \label{eq:cov5}\\
    &= \log |\calU|,
  \end{align}
  and hence \eqref{eq:thm1hx} is verified.  Note that
  \cite[Theorem~10]{FanoJ} can still be applied even if $X$ may be
  defined on a countably infinite alphabet.  If $H(P_X) = \log
  |\calU|$, equality in~\eqref{eq:cov5} holds so that $P_X \equiv
  P_B$.  Finally,~\eqref{eq:thm1pr} and~\eqref{eq:thm1hr} follow from
  the symmetric roles of $X$ and $R$ in
  \eqref{eq:IUX0} -- \eqref{eq:IUR0}.
\end{proof}

\begin{corollary}
  \label{th:noEPS}
  No error free perfect secrecy system can be constructed if the
  source message $U$ has a countably infinite support or a support
  with unbounded size.
\end{corollary}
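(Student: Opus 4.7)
The proof is an immediate consequence of the pointwise bound $P_R(r) \le |\calU|^{-1}$ in~\eqref{eq:thm1pr} (equivalently, $P_X(x) \le |\calU|^{-1}$ in~\eqref{eq:thm1px}). The plan is to argue by contradiction. Suppose that an EPS system $(R, U, X)$ exists with $|\calU|$ infinite. Then~\eqref{eq:thm1pr} of Theorem~\ref{th:BoundonXR} gives $P_R(r) \le 1/|\calU| = 0$ for every $r \in \calR$, so $P_R$ would vanish identically, contradicting $\sum_{r \in \calR} P_R(r) = 1$. The same contradiction is obtained from~\eqref{eq:thm1px} applied to $P_X$. This disposes of the countably infinite case directly.

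For the phrase \emph{support with unbounded size}, I read this as referring either to a single EPS system whose $\calU$ is uncountable, or to a family of EPS systems where $|\calU|$ can be taken arbitrarily large without a uniform finite bound. In the first reading, the argument above still works: $P_R(r) \le 1/|\calU|$ forces every atom of $P_R$ to be zero, which is incompatible with $R$ being a discrete random variable (as implicitly required for $H(R)$ and the summation in the proof of Theorem~\ref{th:BoundonXR} to be meaningful). In the second reading, for any fixed finite key alphabet $\calR$ the bound $1 = \sum_r P_R(r) \le |\calR| \cdot |\calU|^{-1}$ forces $|\calR| \ge |\calU|$, so no fixed $\calR$ can serve an unbounded family.

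I do not expect any real technical obstacle here; the corollary is essentially a one-line deduction from the sharp pointwise bound of Theorem~\ref{th:BoundonXR}. The only thing worth stating explicitly is the standing assumption that $R$ and $X$ are discrete with at most countable supports, which is what makes $P_R(r) \le 0$ an actual contradiction rather than merely a statement about a density.
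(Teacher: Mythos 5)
Your proposal is correct and takes essentially the same route as the paper: the paper also argues by contradiction from the pointwise bound, observing that the chain \eqref{eq:cov-1}--\eqref{eq:cov4} establishing $|\calU|\,P_X(x)\le 1$ remains valid for countably infinite $\calU$, which is incompatible with $P_X(x)>0$. The only cosmetic difference is that the paper re-inspects the derivation rather than citing \eqref{eq:thm1px}/\eqref{eq:thm1pr} directly (since $|\calU|^{-1}$ is only formally meaningful for finite $\calU$), a point you effectively cover with your remark about discreteness.
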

\begin{proof}
  Assume in contradiction that an EPS system exists for a source
  message $U\sim P_U$ with countably infinite support, $|\calU|
  = \infty$.  Note that \eqref{eq:cov-1} -- \eqref{eq:cov4} are still
  valid in this case. However, the conclusion that $|\calU|P_{X}(x)
  \leq 1$ for any $x \in \calX$ contradicts $|\calU| = \infty$.
\end{proof}

The following three remarks emphasize some of the (perhaps unexpected)
consequences of Theorem~\ref{th:BoundonXR}.
\begin{enumerate}
\item One could naturally expect that $H(U)$ is the critical quantity
  setting a lower bound on $H(R)$ and $H(X)$.  However,
  Theorem~\ref{th:BoundonXR} shows that $H(R)$ and $H(X)$ can be
  arbitrarily large, as long as the size of the support of $U$ is also
  arbitrarily large, \emph{even when $H(U)$ is small}.

\item One may further expect that $\log |\calU| \leq H(R)$ is tight
  only if the source distribution $P_U$ is unknown.
  However,~\eqref{eq:thm1hx} and~\eqref{eq:thm1hr} show that fixing
  $P_U$ does not reduce the lower bounds on either the initial key
  requirement, or the number of channel uses required to convey the
  ciphertext.

\item If the source message $U$ is defined on a countably infinite
  alphabet, it is not possible to design an error free perfect secrecy
  system (Corollary~\ref{th:noEPS}).  Therefore, if a cipher system is
  required for such a source, at least one of the constraints
  \eqref{eq:IUX0} -- \eqref{eq:IUR0} must be relaxed.
\end{enumerate}

The following example compares Shannon's fundamental
bound~\eqref{eq:shannonB} with Theorem~\ref{th:BoundonXR}.  It also
illustrates that the quantity $H(R)$ is insufficient for determination
of the requirements on the secret key $R$.
\begin{example}
Suppose $P_U = (0.3, 0.3, 0.3, 0.1)$ so that $H(U) = 1.895$ bits and
$\log|\calU| = 2$ bits.
\begin{enumerate}
\item Consider $R$ chosen independently of $U$ according to $P_R =
  (0.4, 0.2, 0.2, 0.2)$ so that $H(R) = 1.922$ bits and $H(U) < H(R) <
  \log|\calU|$.  Although $P_R$ satisfies Shannon's fundamental
  bound~\eqref{eq:shannonB}, Theorem~\ref{th:BoundonXR}, in
  particular~\eqref{eq:thm1hr}, shows this choice of key $R$ is
  insufficient to achieve error free perfect secrecy.

\item Consider $P_R = (0.4, 0.15, 0.15, 0.15, 0.15)$ so that $H(R) =
  2.171$ bits and $H(U) < \log|\calU| < H(R)$.  However, this choice
  of key $R$ is insufficient for error free perfect secrecy, since
  from~\eqref{eq:thm1pr}, $\max_r P_R(r) = 0.4 > 0.25 = |\calU|^{-1}$.
\end{enumerate}
\end{example}

Theorem~\ref{th:BoundonXR} not only applies to systems of the form
shown in Fig.~\ref{fig:encdec} (which includes Fig.~\ref{fig:comencry}
as a special case), but also to multi-letter variations.  For example,
we can accumulate $n$ symbols from the source $(U_1, U_2, \ldots,
U_n)$ and treat these $n$ symbols together as one super-symbol $U$. It
is reasonable to consider finite $n$ because practical systems have
only finite resources to store the super-symbol.  Unless the source
has some special structure, the distribution of $U$ cannot be uniform
for any $n$ if the $U_i$ are not uniform.  For example, if the source
is stationary and memoryless, accumulating symbols will only make
$H(X)$ and $H(R)$ grow with $n \log |\calU|$.

One may argue that the coding rate of $H(X)$ could be reduced because
the sender and receiver share the same side information $R$ and $I(X;
R) > 0$ is possible.  In other words, a compressor may be appended to
the encoder in Fig.~\ref{fig:encdec} in order to reduce the size of
the ciphertext. This configuration is shown in
Fig.~\ref{fig:HXgivenR}.  However, we cannot simply apply the results
from source coding with side information here, because the ciphertext
still needs to satisfy the security constraint.  If the new output $Y$
satisfies the perfect secrecy and zero-error constraints, $I(U; Y) =
H(U|RY) = 0$, then $(R, U, Y)$ in Fig.~\ref{fig:HXgivenR} is simply
another EPS system, governed by Theorem~\ref{th:BoundonXR}.
\begin{figure}[htbp]
  \begin{center}
    \includegraphics[scale=0.8]{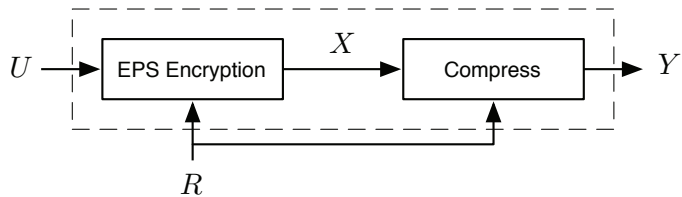} \caption{\label{fig:HXgivenR}
      Compressing the output of an EPS cipher.}
  \end{center}
\end{figure}

To complete this section, we show that the lower
bounds~\eqref{eq:thm1hx} and~\eqref{eq:thm1hr} are simultaneously
achievable using a one-time pad~\cite{OTP}.
\begin{definition}[One-time pad]\label{def:OTP}
  Without lost of generality, let $\calU = \{0, \ldots, M-1\}$ be the
  support of $U$.  Let $R$ be independent of $U$ and uniformly
  distributed in $\calU$ and let $X$ be generated according to the
  \emph{one-time pad} as $X = (U + R) \mod M$.  Then $U$ can be
  recovered via $(X + R) \mod M$.
\end{definition}

It is easy to verify that \eqref{eq:IUX0} -- \eqref{eq:IUR0} are satisfied and $H(X) = H(R) = \log M$.
Therefore, we have proved the following theorem.
\begin{theorem}[Achieving the minimum $H(X)$ and $H(R)$]
  \label{th:su}
  Let $\calU$ be the support of $U$.  The one-time pad of
  Definition~\ref{def:OTP} is an EPS system achieving $H(X) = \log
  |\calU|$ and $H(R) = \log |\calU|$.
\end{theorem}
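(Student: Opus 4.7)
The plan is a direct verification that the one-time pad in Definition~\ref{def:OTP} satisfies each of the three EPS constraints \eqref{eq:IUX0}--\eqref{eq:IUR0} and then a computation of $H(X)$ and $H(R)$. Since the lower bounds $H(X) \ge \log|\calU|$ and $H(R) \ge \log|\calU|$ have already been established in Theorem~\ref{th:BoundonXR}, all that remains is to check achievability.

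First I would handle \eqref{eq:IUR0} and \eqref{eq:HURX0}, which are essentially free. The independence $I(U;R) = 0$ holds by hypothesis, since the definition stipulates that $R$ is drawn independently of $U$. For the error-free property, note that because addition modulo $M$ is invertible, $U = (X - R) \bmod M$ is a deterministic function of $(R, X)$, so $H(U \given R X) = 0$.

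The main content is the perfect-secrecy constraint \eqref{eq:IUX0}. Here I would compute $P_{X \given U}(x \given u)$. Fixing any $u \in \calU$ and any $x \in \calU$, the event $X = x$ conditioned on $U = u$ is exactly the event $R = (x - u) \bmod M$, which has probability $1/M$ because $R$ is uniform on $\calU = \{0, \ldots, M-1\}$ and the map $r \mapsto (u+r) \bmod M$ is a bijection on $\calU$. Hence $P_{X \given U}(x \given u) = 1/M$ for every $(u, x)$, which simultaneously yields $I(U;X) = 0$ and the fact that $X$ is uniform on $\calU$.

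Given uniformity, $H(X) = \log M = \log |\calU|$ and $H(R) = \log M = \log |\calU|$ immediately. There is no real obstacle in this argument; the only subtlety is to make sure the bijection step is invoked cleanly so that the uniformity of $X$ (not merely the marginal, but the conditional $P_{X \given U}$) drops out, which is what simultaneously gives both \eqref{eq:IUX0} and the value of $H(X)$.
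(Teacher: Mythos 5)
Your proof is correct and follows the same route the paper intends: the paper simply asserts that \eqref{eq:IUX0}--\eqref{eq:IUR0} and $H(X)=H(R)=\log M$ are "easy to verify," and your bijection argument for $P_{X\given U}(x\given u)=1/M$ is exactly the verification being left implicit. No issues.
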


\section{Multiple Messages and Key Consumption}\label{se:MultUse}
In Section~\ref{se:SingleUse}, Theorem~\ref{th:su} proved that the
one-time pad is ``optimal'' in the sense that it simultaneously
minimizes $H(X)$ and $H(R)$. This immediately suggests that the
one-time pad leaves no room for improvement. However, this conclusion
in fact stems from a folk theorem that the ``required size of the
secret key'' is measured by the key entropy. The hidden assumption
behind this folklore is that the \emph{cipher system is used only
  once}. In typical practice, a cipher system will be used repeatedly
for the transmission of multiple messages.


Consider the following scenario.  Suppose an initial secret key $R$ is
delivered to the sender and the receiver prior to commencement of
message transmission.  Now, suppose the sender uses this key to
encrypt a message $U$, which is then delivered to the receiver over
the public channel. Clearly, some portion of the secret randomness $R$
has now been used.  The central question is as follows: Can the sender
and receiver continue to securely communicate without first receiving
a new key?  For example, if $U$ is a single bit and $R$ is a 100-bit
random key, it is indeed likely that another message can be securely
transmitted. The natural questions are: What is the maximum size of
the second message?  Alternatively, how much of the key $R$ was
consumed in the first round of transmission?  Below, we will show that
when an error free perfect secrecy system is used multiple times, the
key consumption should not be measured by $H(R)$ but by $I(R; UX)$. In
fact, with respect to our definitions, we will exhibit systems with
key consumption that can be made arbitrarily close to $H(U)$.

The following example illustrates some of the basic ideas which will
be elaborated in this section.
\begin{example} \label{eg:KeyConsumption} Suppose the sender and the
  receiver share a secret key $R = \{B_1, B_2, \ldots, B_n\}$, where
  all of the $B_i$, $i=1,2,\dots,n$ are independent and uniformly
  distributed over $\{0,1\}$.  Let $P_U(0) = 0.5$ and $P_U(1) = P_U(2)
  = 0.25$.  Construct a new random variable $U'$ such that
  \begin{equation}
    U' =
      \begin{cases}
        \left(0, B_{n+1}\right), & U = 0 \\
        (1, 0), & U = 1 \\
        (1, 1), & U = 2
      \end{cases}
     \label{eq:defUpi}
  \end{equation}
  where $B_{n+1}$ is generated by the sender independently of $U$ and
  $R$ such that $P_{B_{n+1}}(0) = P_{B_{n+1}}(1) = 0.5$.

  Let $K = (B_1, B_2)$ and $X = U' \oplus K$. Upon receiving $X$, the
  receiver can decode $U$ from $X$ and $K$, where $K$ is solely a
  function of $R$. In fact, if $U=0$, the receiver can further decode
  $B_{n+1}$.  Let
  \begin{equation*}
    R' =
      \begin{cases}
        (B_3, B_4, \ldots, B_n), & U \in\{1,2\} \\
        (B_3, B_4, \ldots, B_n, B_{n+1}), & U = 0.
      \end{cases}
  \end{equation*}
  We refer to $R'$ as the \emph{residual secret randomness} shared by
  the sender and the receiver.  Note that $R'$ may not be a
  deterministic function of $R$, as the new shared common randomness
  can be generated by a probabilistic encoder. According
  to~\eqref{eq:defUpi}, a new random bit is secretly transmitted from
  the sender to the receiver when $U=0$.  After the system is used
  once, the expected key consumption is therefore given by
  \begin{equation}
    P_U(0) \cdot 1 + P_U(1) \cdot 2 + P_U(2) \cdot 2 = 1.5 =
    H(U), \label{eq:expectK}
  \end{equation}
  which happens to also equal $I(R;UX)$. It turns out that this is not
  mere coincidence.
\end{example}

We now define three parameters whose operational meanings are
justified in the rest of this section.
\begin{definition} \label{de:HRUX} The \emph{residual secret
    randomness} of an error free perfect secrecy system is
  \begin{equation*}
    H\left(R\given UX\right).
  \end{equation*}
\end{definition}

\begin{definition} \label{de:IRUX} The \emph{expected key consumption}
  of an error free perfect secrecy system is
\begin{equation*}
  I(R; UX).
\end{equation*}
\end{definition}

\begin{definition} \label{de:IRX} The \emph{excess key consumption} of
  an error free perfect secrecy system is\
  \begin{equation*}
    I(R; X).
  \end{equation*}
\end{definition}

Roughly speaking, we will show that after an EPS system is used once,
$H(R\given UX)$ is the amount of remaining key that can be used for
encryption of the next message.  Since the sender and the receiver
initially share a quantity $H(R)$ of secret randomness, the key
consumption is equal to $H(R) - H(R|UX) = I(R; UX)$.  We will provide
achievable schemes to show that the minimal key consumption is $H(U)$
and hence, the excess key consumption is $I(R; UX) - H(U)$ which is
equal to $I(R; X)$ in an EPS system.

We first justify Definition~\ref{de:HRUX}. Consider the scenario of
Fig.~\ref{fig:UVXY} in which the sender and receiver share a secret
key $R$, and two EPS systems are used sequentially by the sender to
securely transmit two (possibly correlated) messages $U$ and $V$.  In
the first round, the sender encodes the message $U$ into $X$, which is
transmitted to the receiver as described in
Section~\ref{se:SingleUse}. In the second round, the sender further
encodes $V$ (or more generally both $U$ and $V$) into $Y$, which is
then transmitted to the receiver.  As before, we require $H(U \given
R X) = H(V\given R X Y) = 0$ and $I(UV ; XY)=0$ to ensure zero-error
decoding and perfect secrecy.
\begin{figure}[htbp]
  \begin{center}
    \includegraphics[scale=0.8]{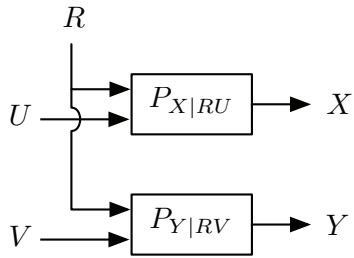}
    \caption{\label{fig:UVXY}Using an error free prefect secrecy
      system twice.  }
  \end{center}
\end{figure}

\begin{theorem}[Justification 1] \label{th:NextV} Consider the
  two-round error free perfect secrecy system of Fig.~\ref{fig:UVXY}.
  If
  \begin{equation}
    I(U V; X Y) = H(U \given R X) = H(V \given R X Y) = 0, \label{eq:NextV0}
  \end{equation}
  then the entropy of the second message $V$ conditioning on the first
  message $U$ is upper bounded by the residual secret randomness,
  \begin{eqnarray}
    H(V \given U) \leq H(R \given U X). \label{eq:NextV1}
  \end{eqnarray}
\end{theorem}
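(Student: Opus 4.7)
The plan is to chain standard entropy inequalities, leveraging all three hypotheses in~\eqref{eq:NextV0}: the perfect secrecy condition $I(UV;XY)=0$, which makes $(U,V)$ statistically independent of $(X,Y)$; the first-round zero-error condition $H(U\given RX)=0$; and the second-round zero-error condition $H(V\given RXY)=0$, which states that $V$ is a deterministic function of $(R,X,Y)$.

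First I would extract from $I(UV;XY)=0$ the conditional independence $I(V;XY\given U)=0$. This follows by the chain rule $I(UV;XY)=I(U;XY)+I(V;XY\given U)$: both terms are nonnegative and sum to zero. Consequently
\begin{equation*}
H(V\given U)=H(V\given UXY).
\end{equation*}
This is the key observation; the rest is bookkeeping.

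Next I would introduce the key $R$ into the conditional entropy as side information. By the chain rule,
\begin{equation*}
H(V\given UXY)\le H(VR\given UXY)=H(R\given UXY)+H(V\given URXY).
\end{equation*}
The second term vanishes because $V$ is a function of $(R,X,Y)$, i.e.\ $H(V\given RXY)=0$ implies $H(V\given URXY)=0$. Then I would discard the extra conditioning on $Y$ using the fact that conditioning cannot increase entropy,
\begin{equation*}
H(R\given UXY)\le H(R\given UX),
\end{equation*}
and concatenate the three displays to conclude $H(V\given U)\le H(R\given UX)$, which is~\eqref{eq:NextV1}.

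I do not anticipate a genuine obstacle: once one recognizes that the perfect secrecy condition $I(UV;XY)=0$ is strong enough to let one freely condition on $(X,Y)$ without disturbing $H(V\given U)$, the remaining steps are just the chain rule and nonnegativity of conditional entropy. The only mild subtlety is noting that $H(V\given RXY)=0$ automatically upgrades to $H(V\given URXY)=0$, which is immediate since adding conditioning variables cannot increase a conditional entropy that is already zero.
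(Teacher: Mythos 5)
Your proof is correct, and every step checks out: the chain rule gives $I(V;XY\given U)=0$ from $I(UV;XY)=0$, hence $H(V\given U)=H(V\given UXY)\le H(R\given UXY)+H(V\given URXY)=H(R\given UXY)\le H(R\given UX)$. The route is genuinely different in presentation from the paper's: the paper proves the bound by exhibiting a single information identity in which the sum $H(R\given UX)-H(V\given U)+I(UV;XY)+H(U\given RX)+H(V\given RXY)$ is rewritten as a sum of five manifestly nonnegative information quantities, and then sets three of the left-hand terms to zero. Your step-by-step decomposition is more transparent, and it exposes something the paper's bundled identity obscures: the hypothesis $H(U\given RX)=0$ is never used, so the bound $H(V\given U)\le H(R\given UX)$ holds under only the perfect-secrecy condition $I(UV;XY)=0$ and the second-round decodability condition $H(V\given RXY)=0$. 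What the paper's approach buys is a mechanical, verification-style argument (the kind produced by checking a candidate Shannon-type inequality); what yours buys is interpretability and a slightly stronger statement.
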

\begin{proof}
  Note that
  \begin{multline*}
    H(R \given  U  X) - H(V \given  U) + I(U V;X  Y) +
    H(U \given  R  X) +H(V \given  R  X   Y)  \\
    =  I(V  Y; U \given R  X) + I(R  U; Y\given X) + I(U; X) +
    H(U\given  R  X  Y) + H(R\given U  V  X  Y) \geq 0.
  \end{multline*}
  Together with~\eqref{eq:NextV0}, \eqref{eq:NextV1} is verified.
\end{proof}

Theorem~\ref{th:NextV} implies that the maximum amount of information
which can be secretly transmitted in the second round is upper bounded
by the residual secret randomness $H(R\given UX) $, suggesting that
$H(R\given UX) $ is indeed measures the amount of key unused in the
first round.  Equivalently, the amount of key that has been consumed
in the first round is equal to
\begin{equation*}
  I(R;UX) = H(R) - H(R\given UX).
\end{equation*}

Whereas Theorem~\ref{th:NextV} justifies the residual key $H(R\given
UX)$ as bounding the entropy of the second round message, we now offer
an alternative justification, showing that the size of the key that
can be extracted after $n$ uses of an EPS system is about $n H(R\given
UX)$.

Consider generation of a new secret key as shown in
Fig.~\ref{fig:2ndR}.  Suppose a sequence of EPS systems $\{(U_i, R_i,
X_i)\}_{i =1}^n$ has been used by a sender and a receiver where $(U_i,
R_i, X_i)$ are i.i.d. with generic distribution $P_{U R X}$.  We use
$(U,R,X)$ to denote the generic random variables.  In order to
securely send additional messages, the sender and the receiver aim to
establish a new secret key $S^m = (S_1, \ldots, S_m)$, where the $S_i$
are i.i.d. with generic distribution $P_S$. To generate the new key
$S^{m}$, we assume that the sender can send a secret message $A$ to
the receiver.  The new secret key $S^m$ will be used to encrypt a
second sequence of messages $V^m$, generating a ciphertext sequence
$Y^m$ such that $\{(V_i, S_i, Y_i)\}_{ i =1}^m$ is another sequence of
EPS systems.
\begin{figure}[htbp]
  \begin{center}
    \includegraphics[scale=0.8]{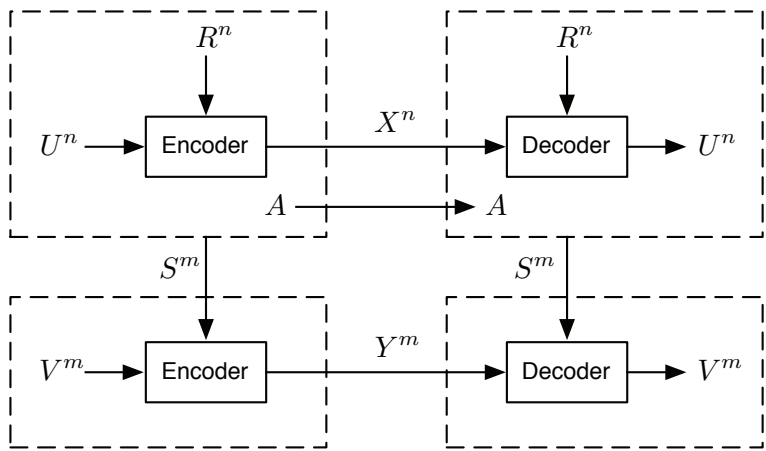}
    \caption{\label{fig:2ndR}Generating a new secret key $S^m$.
    }
  \end{center}
\end{figure}

Assume
\begin{align}
  I\left(V^m; S^m  U^n  X^n\right) &= 0 \label{eq:NextSreq28}\\
  I\left(U^n  X^n; Y^m \given  V^m  S^m\right) &= 0 \label{eq:NextSreq29}\\
  H\left(S^m \given  R^n  U^n  X^n  A\right) &= 0  \label{eq:NextSreq3}\\
  I\left(S^m; U^n  X^n\right) &= 0. \label{eq:NextSreq4}
\end{align}
These assumptions adopted with the following reasoning.  We assume in
\eqref{eq:NextSreq28} that the new message $V^m$ is generated
independently of the previous uses of the EPS systems.
Also,~\eqref{eq:NextSreq29} holds due to $(U^n,  X^n) - (V^m,  S^m) - Y^m $ forms a Markov chain. The sender and the receiver can agree on $S^m$
without error due to~\eqref{eq:NextSreq3}.  The justification
of~\eqref{eq:NextSreq4} is given as follows.

Although $\{(U_i, R_i, X_i)\}_ {i =1}^n$ and $\{(V_i, S_i, Y_i)\}_{ i
  =1}^m$ are individually sequences of EPS systems, it is possible
that their combination is not secure, $I(X^n, Y^m ; U^n, V^m) > 0$.
For example, suppose $U^n$ and $V^m$ are i.i.d. with uniform
distribution and $m=n$.  If $S^m = U^n$, then using a one-time pad,
$I(V^m; Y^m) = 0$ but $I(X^n, Y^m ; U^n, V^m) \geq H(U^n)$.  The
following theorem shows that joint EPS systems satisfying
\eqref{eq:NextSreq28} -- \eqref{eq:NextSreq4} are still perfectly
secure.
\begin{theorem}
  \label{th:2ndeps}
  Consider two sequences of i.i.d. EPS systems $\{(U_i, R_i, X_i)\}_
  {i =1}^n$ and $\{(V_i, S_i, Y_i)\}_{ i =1}^m$ satisfying
  \eqref{eq:NextSreq28} -- \eqref{eq:NextSreq4}.  Then the joint EPS
  system is still perfectly secure,
  \begin{equation}
    I\left(X^n Y^m ; U^n V^m\right) = 0. \label{eq:2ndeps1}
  \end{equation}
\end{theorem}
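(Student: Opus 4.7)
The plan is to establish~\eqref{eq:2ndeps1} by explicitly factorizing the joint distribution of $(U^n, X^n, V^m, Y^m)$, working via an intermediate factorization on the augmented tuple that also includes the new key $S^m$. First I combine \eqref{eq:NextSreq28} and \eqref{eq:NextSreq4} to deduce that $V^m$, $S^m$, and $(U^n, X^n)$ are mutually independent, so
\begin{equation*}
P_{U^n X^n V^m S^m} = P_{U^n X^n}\, P_{V^m}\, P_{S^m}.
\end{equation*}
Condition~\eqref{eq:NextSreq29} is equivalent to $P_{Y^m \given U^n X^n V^m S^m} = P_{Y^m \given V^m S^m}$, so multiplying yields
\begin{equation*}
P_{U^n X^n V^m S^m Y^m} = P_{U^n X^n}\, P_{V^m}\, P_{S^m}\, P_{Y^m \given V^m S^m}.
\end{equation*}

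The crucial step is then to marginalize over $S^m$. Because $V^m$ and $S^m$ are independent (a direct consequence of \eqref{eq:NextSreq28}), the mixture $\sum_{s^m} P_{S^m}(s^m)\, P_{Y^m \given V^m S^m}(y^m \given v^m, s^m)$ coincides with $P_{Y^m \given V^m}(y^m \given v^m)$, so the marginal simplifies to
\begin{equation*}
P_{U^n X^n V^m Y^m} = P_{U^n X^n}\, P_{V^m Y^m},
\end{equation*}
which says $(U^n, X^n)$ is independent of $(V^m, Y^m)$. To finish, I invoke the within-sequence perfect secrecy: since $\{(U_i, R_i, X_i)\}_{i=1}^n$ and $\{(V_i, S_i, Y_i)\}_{i=1}^m$ are i.i.d.\ EPS sequences, $P_{U^n X^n} = P_{U^n}\, P_{X^n}$ and $P_{V^m Y^m} = P_{V^m}\, P_{Y^m}$. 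Substituting and regrouping yields $P_{U^n X^n V^m Y^m} = P_{U^n V^m}\, P_{X^n Y^m}$, which is exactly~\eqref{eq:2ndeps1}.

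The one potentially subtle point is the marginalization: identifying $\sum_{s^m} P_{S^m}\, P_{Y^m \given V^m S^m}$ with $P_{Y^m \given V^m}$ requires the independence of $V^m$ and $S^m$, which is supplied by~\eqref{eq:NextSreq28} but would not follow from \eqref{eq:NextSreq4} or \eqref{eq:NextSreq29} alone. This is why all three cross-system assumptions are needed simultaneously; once the right factorization is in hand, everything else is routine bookkeeping of probability distributions.
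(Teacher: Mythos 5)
Your proof is correct, and it takes a genuinely different route from the paper's. The paper proves the theorem by writing down a single information-theoretic identity: the sum of the five vanishing quantities in~\eqref{eq:2ndeps0} minus $I(X^n Y^m; U^n V^m)$ is shown to equal a sum of four manifestly nonnegative mutual informations, whence $I(X^n Y^m; U^n V^m) \le 0$ and the result follows from nonnegativity. You instead factorize the joint distribution directly: \eqref{eq:NextSreq28} and \eqref{eq:NextSreq4} give $P_{U^n X^n V^m S^m} = P_{U^n X^n} P_{V^m} P_{S^m}$, \eqref{eq:NextSreq29} appends the factor $P_{Y^m \given V^m S^m}$, and the marginalization over $S^m$ (valid precisely because $V^m \perp S^m$, as you note) collapses this to $P_{U^n X^n} P_{V^m Y^m}$; the within-sequence secrecy $P_{U^n X^n} = P_{U^n} P_{X^n}$ and $P_{V^m Y^m} = P_{V^m} P_{Y^m}$ then regroups into $P_{U^n V^m} P_{X^n Y^m}$. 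Your argument is more elementary and more transparent about which assumption does what, and it actually delivers a stronger conclusion --- the four blocks $U^n, X^n, V^m, Y^m$ are mutually independent, not merely $(U^n,V^m) \perp (X^n, Y^m)$. The paper's approach buys brevity and mechanical verifiability (the identity can be checked term by term with the chain rule, in the style of an ITIP computation) at the cost of obscuring the underlying independence structure. Both proofs correctly leave \eqref{eq:NextSreq3} unused.
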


\begin{proof}
  By assumption
  \begin{multline}
     I(U^n; X^n) = I(V^m; Y^m) = I(V^m; S^m, U^n, X^n) = \\
     I(U^n, X^n; Y^m | V^m, S^m) = I(S^m; U^n, X^n) = 0. \label{eq:2ndeps0}
  \end{multline}
  Note that
  \begin{multline*}
    I\left(S^m; U^n,  X^n\right) + I\left(V^m; S^m,  U^n,  X^n\right) +
    I\left(V^m; Y^m\right) + \\
     I\left(U^n; X^n\right) + I\left(U^n,  X^n; Y^m \given  V^m  S^m\right) -
      I\left(X^n, Y^m ; U^n,  V^m\right) \\
    = I\left(U^n, X^n; S^m\given V^m, Y^m\right)  + I\left(V^m;S^m\right)
    + I\left(X^n;Y^m\right) +  I\left(U^n;V^m\right)
    \geq 0.
  \end{multline*}
  Together with~\eqref{eq:2ndeps0}, $I(X^n, Y^m ; U^n, V^m) \leq
  0$. Since $I(X^n, Y^m ; U^n, V^m) \geq 0$,~\eqref{eq:2ndeps1} is
  verified.
\end{proof}

In order to generate a new key $S^m$, a secret auxiliary random
variable $A$ is sent from the sender to the receiver.  Here, $A$ is
generated by a probabilistic encoder with $\{(R_i, U_i, X_i)\}_ {i
  =1}^n$ as input.  In Example~\ref{eg:KeyConsumption} above, suppose
we wanted to restore $n$ secret bits after the system is used
once. Then $A$ is a fair bit if $U = 0$ and $A$ consists of two fair
bits if $U = 1$ or $2$.  We measure the expected size of $A$ by
$H\left(A\given R^n U^n X^n\right)$.  Since we can directly treat $A$
as the new secret key $S^m$, it is reasonable to expect that $H(S^m)
\geq H\left(A\given R^n U^n X^n\right)$.  Therefore, it is of interest
to know by how much $H(S^m)$ can exceed $H(A\given R^n, U^n, X^n)$ for a
given sequence of EPS systems.  The following theorem shows that the
secret randomness, which can be extracted from $\{(R_i, U_i, X_i)\}_{i
  =1}^n$ with help from $A$, is measured by the residual secret
randomness $H(R\mid U, X)$.
\begin{theorem}[Justification 2]
  \label{th:NewKey}
  Consider two sequences of i.i.d. EPS systems $\{(U_i, R_i, X_i)\}_
  {i =1}^n$ and $\{(V_i, S_i, Y_i)\}_{ i =1}^m$ and any $A$.  If
  \eqref{eq:NextSreq28} -- \eqref{eq:NextSreq4} are satisfied, then
  \begin{equation}
    H\left(S^m\right) - H\left(A\given R^n U^n X^n\right) \leq  n
    H\left(R\given U X\right). \label{eq:NewKey2}
  \end{equation}
  On the other hand, it is possible to generate $S^m$ such that
  \eqref{eq:NextSreq28} -- \eqref{eq:NextSreq4} are satisfied and
  \begin{equation}
    H\left(S^m\right) - H\left(A\given R^n U^n X^n\right) \geq n
    H\left(R\given U X\right) - \log 2 \label{eq:NewKey3}
  \end{equation}
  for a sufficiently large $m$ such that $$\max_{s^m} P_{S^m}(s^m) <
  \min_{r^n, u^n ,x^n} P_{R^n\given U^n X^n}\left(r^n\given u^n, x^n\right).$$
\end{theorem}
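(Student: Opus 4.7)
The plan is to handle the two inequalities separately: the upper bound \eqref{eq:NewKey2} follows from a short chain-rule calculation, while the achievability \eqref{eq:NewKey3} requires an explicit coupling of $R^n$ with $S^m$ obtained from intersecting two partitions of the unit interval.

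For the upper bound, I would begin by noting that \eqref{eq:NextSreq4} gives $H(S^m) = H(S^m\mid U^n X^n)$. Expanding with the chain rule,
\[
H\left(S^m\mid U^n X^n\right) \le H\left(S^m, R^n, A \mid U^n X^n\right) = H\left(R^n \mid U^n X^n\right) + H\left(A \mid R^n U^n X^n\right) + H\left(S^m\mid R^n U^n X^n A\right).
\]
The last term vanishes by \eqref{eq:NextSreq3} and the first equals $n H(R\mid U X)$ because $\{(U_i, R_i, X_i)\}$ is i.i.d. Rearranging yields \eqref{eq:NewKey2}.

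For the achievability, I would construct the conditional joint law of $(R^n, S^m)$ given $(U^n, X^n)$ as follows. For each $(u^n, x^n)$, order the supports of $R^n$ and $S^m$ arbitrarily, partition $[0,1]$ into consecutive intervals of lengths $P_{R^n\mid U^n X^n}(r^n\mid u^n, x^n)$, and independently partition $[0,1]$ into consecutive intervals of lengths $P_{S^m}(s^m)$. Declare the joint conditional mass of $(r^n, s^m)$ to be the length of the intersection of the two corresponding intervals. By construction the marginals are the prescribed $P_{R^n\mid U^n X^n}(\cdot\mid u^n, x^n)$ and $P_{S^m}(\cdot)$; in particular $S^m$ has the desired distribution and is independent of $(U^n, X^n)$, establishing \eqref{eq:NextSreq4}. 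The hypothesis $\max_{s^m} P_{S^m}(s^m) < \min_{r^n, u^n, x^n} P_{R^n\mid U^n X^n}(r^n\mid u^n, x^n)$ says every $s^m$-interval is strictly shorter than every $r^n$-interval, so each $s^m$-interval overlaps with at most two consecutive $r^n$-intervals. Consequently, for each $(s^m, u^n, x^n)$ there are at most two compatible values of $r^n$, which gives $H(R^n \mid S^m U^n X^n) \le \log 2$.

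To tie off the remaining constraints I would set $A = S^m$, making \eqref{eq:NextSreq3} automatic and giving $H(A\mid R^n U^n X^n) = H(S^m\mid R^n U^n X^n)$; generate $V^m$ independently of $(R^n, U^n, X^n, S^m)$ so that \eqref{eq:NextSreq28} holds; and produce $Y^m$ as the per-letter one-time pad of $V^m$ by $S^m$, so that $Y^m$ is a function of $(V^m, S^m)$ and \eqref{eq:NextSreq29} follows. Then
\[
H\left(S^m\right) - H\left(A\mid R^n U^n X^n\right) = I\left(S^m; R^n \mid U^n X^n\right) = n H\left(R\mid U X\right) - H\left(R^n \mid S^m U^n X^n\right) \ge n H\left(R\mid U X\right) - \log 2,
\]
which is \eqref{eq:NewKey3}. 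The main obstacle, and the only non-routine step, is recognising that the hypothesis on the extreme probabilities translates into the at-most-two-preimages property of the coupling; once this geometric observation is in hand, the information-theoretic bookkeeping is immediate, and verifying consistency of the auxiliary choices $A = S^m$, independent $V^m$, and one-time-pad $Y^m$ with \eqref{eq:NextSreq28}--\eqref{eq:NextSreq4} is routine.
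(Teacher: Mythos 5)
Your proof is correct and follows essentially the same route as the paper's: the converse is the same chain-rule computation (expand $H(S^m, R^n, A \mid U^n X^n)$ and kill the terms using \eqref{eq:NextSreq3}--\eqref{eq:NextSreq4}), and the achievability rests on the identical interval-partition coupling together with the same observation that the hypothesis $\max_{s^m} P_{S^m}(s^m) < \min P_{R^n \mid U^n X^n}$ forces each $S^m$-cell to meet at most two $R^n$-cells, giving $H(R^n \mid S^m U^n X^n) \le \log 2$. Your one streamlining is taking $A = S^m$ instead of the paper's explicit sub-cell index; since the paper's $A$ is constructed precisely so that $H(A \mid R^n U^n X^n) = H(S^m \mid R^n U^n X^n)$, this changes nothing in the stated bound and shortens the bookkeeping, and it is legitimate because the theorem only constrains the entropies, not the support size of $A$. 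The only loose end is your choice of $Y^m$ as a per-letter one-time pad, which implicitly requires each $S_i$ to be uniform even though $P_S$ is a generic distribution; it would be cleaner to say, as the paper does, that $Y^m$ is produced by whatever (possibly randomized) EPS encoder acts on $(V^m, S^m)$ alone, so that the Markov chain $(U^n, X^n) - (V^m, S^m) - Y^m$ and hence \eqref{eq:NextSreq29} hold.
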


\begin{proof}
  We first prove \eqref{eq:NewKey2} by showing that
  \begin{align}
    H\left(S^m\right)  &= I\left(S^m; U^n  X^n\right) +
    H\left(S^m\given  A  R^n  U^n  X^n\right) +
      I\left(S^m; A  R^n \given  U^n  X^n\right)\\
    &= I\left(S^m; A  R^n \given  U^n  X^n\right)  \label{eq:isar1}\\
    &\leq H\left(A  R^n \given  U^n  X^n\right)  \\
    &= H\left(A \given  R^n  U^n  X^n\right) + H\left(R^n \given  U^n
      X^n\right)\\
    &= H\left(A \given  R^n  U^n  X^n\right) + n\,H\left(R \given  U
      X\right),  \label{eq:isar2}
  \end{align}
  where~\eqref{eq:isar1} follows from \eqref{eq:NextSreq3} --
  \eqref{eq:NextSreq4} and~\eqref{eq:isar2} follows from the fact
  that $\{(U_i, R_i, X_i)\}_ {i =1}^n$ is a sequence of i.i.d. EPS
  systems.

  The proof of the achievability part in~\eqref{eq:NewKey3} is via
  construction.  With reference to Fig.~\ref{fig:defineA}, consider
  two partitions of the unit interval into disjoint ``cells''.  The
  width of cell $i$ in the first partition is $P_{S^m}(i)$ for $1 \leq
  i \leq |\calS|^m$, where $\calS$ is the support of $S_i$.  Consider
  $U^n = u^n$ and $X^n = x^n$.  The width of cell $i$ in the second
  partition is $P_{R^n\given U^n, X^n}\left(i\given u^n, x^n\right)$
  for $1 \leq i \leq |\calR|^n$.  The distribution of $A$ is
  constructed to divide the second partition as shown in
  Fig.~\ref{fig:defineA}.
  \begin{figure}[htbp]
    \begin{center}
      \includegraphics[scale=1]{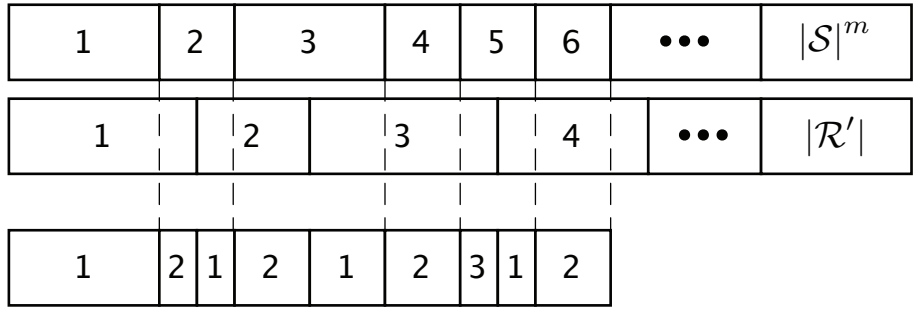}
      \caption{\label{fig:defineA}
        An assignment of $A$.}
    \end{center}
  \end{figure}
%
To simplify notations, we consider the support of $R^n$ to be a set of consecutive integers $\{1, ..., |\calR'|\}$ when $U^n = u^n$ and $X^n = x^n$.   Suppose $R^n = r$ and let
\begin{align}
    b = \max \left\{j: \sum_{i=1}^j P_S(i) > \sum_{i=1}^{r-1} P_{R^n\given U^n, X^n}\left(i\given u^n, x^n\right) \right\}.
\end{align}
For $j \geq 1$, $A$ is defined by $P_A(j) = \frac{a(j)}{P_{R^n\given U^n, X^n}\left(r\given u^n, x^n\right) }$, where
  \begin{align}
        \sum_{i=1}^{j} a(i)  &= \min \left\{\sum_{i=1}^{b + j-1} P_S(i),  \sum_{i=1}^{r} P_{R^n\given U^n, X^n}\left(i\given u^n, x^n\right) \right\} -  \sum_{i=1}^{r-1} P_{R^n\given U^n, X^n}\left(i\given u^n, x^n\right).
  \end{align}
For the example in Fig.~\ref{fig:defineA}, when $R^n = 1$,
  \begin{align}
    P_A(1) = \frac{P_{S^m}(1)}{P_{R^n\given U^n, X^n}\left(1\given u^n, x^n\right)} = 1 - P_A(2).
  \end{align}
  When $R^n = 2$,
  \begin{align}
    P_A(1)  = \frac{P_{S^m}(1) + P_{S^m}(2) - P_{R^n\given U^n, X^n}\left(1\given u^n, x^n\right)}{P_{R^n\given U^n, X^n}\left(2\given u^n, x^n\right)} =
    1 - P_A(2).
  \end{align}

  By definition $S^m$ is determined from $R^n$ and $A$ for any fixed
  $U^n = u^n$ and $X^n = x^n$.  On the other hand, $A$ is also
  determined from $S^m$ and $R^n$.  Therefore,
  \begin{align}
    H\left(S^m \given A, R^n, U^n, X^n\right) = H\left(A \given S^m,
      R^n, U^n, X^n\right) = 0. \label{eq:nk1}
  \end{align}
  By choosing $m$ sufficiently large, such that
  \begin{align}
    \max_{s^m} P_{S^m}(s^m) < \min_{r^n, u^n ,x^n} P_{R^n\given U^n,
      X^n}\left(r^n\given u^n, x^n\right),
  \end{align}
  $R^n$ can take at most two possible values for any given $\left(S^m, U^n,
  X^n\right)$ and hence
  \begin{align}
    H\left(R^n\given S^m, U^n, X^n\right) \leq \log 2. \label{eq:nk15}
  \end{align}
  Therefore,
  \begin{align}
    \lefteqn{H\left(A \given  R^n  U^n  X^n\right)} \\
    &= I\left(A ; S^m \given  R^n  U^n  X^n\right) + H\left(A \given
      S^m  R^n  U^n  X^n\right)\\
    &= I\left(A ; S^m \given  R^n  U^n  X^n\right) + H\left(S^m \given
      A  R^n  U^n  X^n\right)  \label{eq:nk2}\\
    &= H\left(S^m \given  R^n  U^n  X^n\right)  \label{eq:nk3}\\
    &= H\left(S^m \right) - H\left(R^n\given  U^n  X^n\right) +
    H\left(R^n\given  S^m  U^n  X^n\right) - I(S^m; U^n, X^n)\\
    &\leq H\left(S^m \right) - H\left(R^n\given  U^n  X^n\right) +
    H\left(R^n\given  S^m  U^n  X^n\right) \\
    &\leq H\left(S^m \right) - H\left(R^n\given  U^n  X^n\right) +
    \log 2,  \label{eq:nk4}
  \end{align}
  where~\eqref{eq:nk2} and~\eqref{eq:nk4} follow from~\eqref{eq:nk1}
  and~\eqref{eq:nk15}, respectively.  Since $\left\{\left(U_i, R_i,
      X_i\right)\right\}_ {i =1}^n$ is a sequence if i.i.d. EPS
  systems, \eqref{eq:NewKey3} is verified.

  For any $\left(U^n, X^n\right)$, the same $P_{S^m\given U^n, X^n}
  \equiv P_{S^m}$ is generated.  Therefore,~\eqref{eq:NextSreq4} is
  verified.  Since $S^m$ is determined by $\left(R^n, U^n, X^n,
    A\right)$,~\eqref{eq:NextSreq3} is verified,
  and~\eqref{eq:NextSreq28} can also be verified as $V^m$ is
  independent of $\left(R^n, U^n, X^n, A\right)$.
  Finally,~\eqref{eq:NextSreq29} is due to the fact that
  $\left\{\left(V_i, S_i, Y_i\right)\right\}_{ i =1}^m$ is a sequence
  of EPS systems.
\end{proof}

Roughly speaking, Theorem~\ref{th:NewKey} shows that for large $n$ and
$m$, the optimal algorithm with the help of $A$ can extract
approximately
\begin{equation*}
  n H\left(R\given U X\right)
\end{equation*}
bits of residual secret randomness from $\left\{(R_i, U_i,
  X_i)\right\}_{i=1}^{n}$.  In~\cite{2DIA}, we considered another
algorithm generating a new secret key with asymptotic rate
$H\left(R\given U X\right)$ without using an auxiliary secret random
variable.  As the sender and receiver initially share $n H(R)$ bits of
secret randomness, the expected key consumption for each use of the
EPS system is
\begin{equation*}
  H(R) - H(R\given U  X) = I(R ; U X),
\end{equation*}
the quantity proposed in Definition~\ref{de:IRUX}.  Next, we exhibit an
important property of $I(R;UX)$.
\begin{theorem}\label{th:LBonEKC}
  In an error free perfect secrecy system, the expected key
  consumption is lower bounded by the source entropy,
  \begin{eqnarray}
    I(R; UX) \geq H(U), \label{eq:LBonEKC}
  \end{eqnarray}
  where equality holds if and only if $I(R; X) = 0$.
\end{theorem}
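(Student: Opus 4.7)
The plan is to decompose $I(R;UX)$ by the chain rule in the form that isolates $H(U)$ explicitly, and then invoke the three EPS constraints one at a time.

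First I would write
\begin{equation*}
I(R;UX) = I(R;X) + I(R;U\given X),
\end{equation*}
which is the standard chain rule for mutual information. The goal is then to show that $I(R;U\given X) = H(U)$, because then $I(R;UX) = I(R;X) + H(U)$ and the theorem follows immediately: the right-hand side is at least $H(U)$ since $I(R;X)\geq 0$, with equality exactly when $I(R;X)=0$.

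Second I would expand
\begin{equation*}
I(R;U\given X) = H(U\given X) - H(U\given RX).
\end{equation*}
Here the two EPS constraints do the work. The perfect secrecy condition $I(U;X)=0$ in \eqref{eq:IUX0} gives $H(U\given X) = H(U)$. The error-free decoding condition $H(U\given RX)=0$ in \eqref{eq:HURX0} kills the second term. Thus $I(R;U\given X) = H(U)$, as wanted. Note that the third EPS property $I(U;R)=0$ is not needed here; it was used earlier to make sense of Definition~\ref{de:IRUX} as an expected key consumption, but the pure inequality \eqref{eq:LBonEKC} follows from only the first two constraints.

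I do not expect any real obstacle: the whole argument is one application of the chain rule plus two direct substitutions from the defining equations of an EPS system. The only thing worth flagging is the equality condition, which simply reads off as $I(R;X)=0$ from the identity $I(R;UX) = I(R;X) + H(U)$; this matches the intuition that any statistical dependence between the ciphertext and the key represents ``excess'' key consumption beyond the source entropy, consistent with Definition~\ref{de:IRX}.
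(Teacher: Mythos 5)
Your proof is correct and is essentially the paper's argument: the paper establishes the identical identity $I(X;R) = I(R;UX) - H(U)$ (via an information diagram) and concludes from $I(X;R)\geq 0$, whereas you verify the same identity explicitly by the chain rule together with \eqref{eq:IUX0} and \eqref{eq:HURX0}. Your side remark that \eqref{eq:IUR0} is not needed for this particular identity is accurate.
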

\begin{proof}
  The information diagram for the random variables $U, X, R$ involved
  in an error free perfect secrecy system satisfying~\eqref{eq:IUX0}
  -- \eqref{eq:IUR0} is shown in Fig.~\ref{fig:IDa}.  It is easy to
  verify that
  \begin{eqnarray}
    I(X; R) = I(R; UX) - H(U). \label{eq:excessK}
  \end{eqnarray}
  Since $I(X; R) \geq 0$, Theorem~\ref{th:LBonEKC} is proved.
\end{proof}
\begin{figure}[htbp]
  \begin{center}
    \subfigure[General EPS system\label{fig:IDa}]{\includegraphics[scale=0.8]{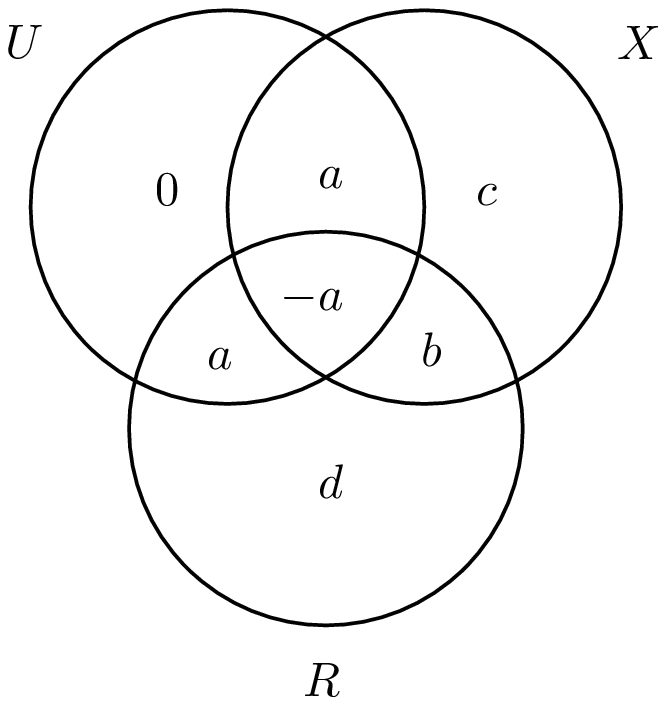}}
    \hspace{10mm}
    \subfigure[Minimum expected key consumption,
    achieving equality in~\eqref{eq:LBonEKC}\label{fig:IDb}]{\includegraphics[scale=0.8]{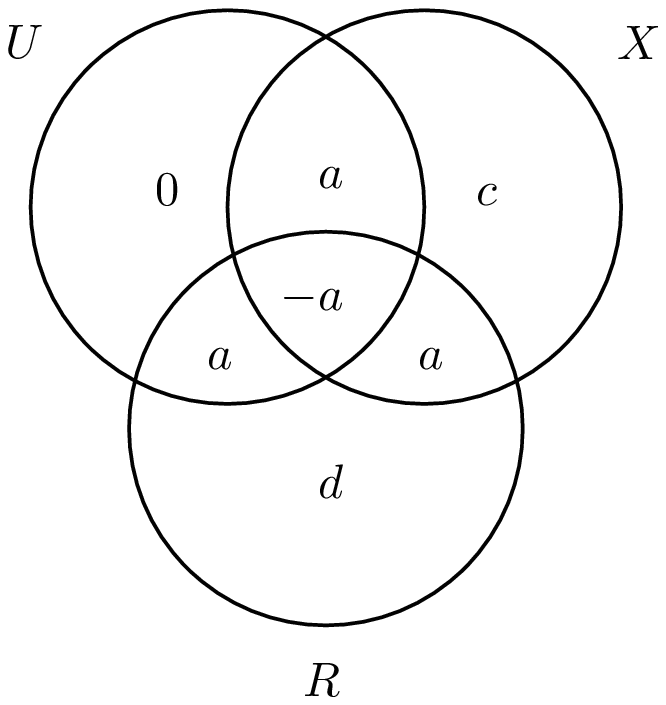}}
    \caption{\label{fig:ID}Information diagrams.}
  \end{center}
\end{figure}

In Section~\ref{se:sub1MinIRUX}, we will describe several EPS coding
schemes achieving $I(R; UX) = H(U)$.  Therefore, $I(X;R)$ measures the
difference between the expected key consumption of an EPS system and
the minimum possible key consumption, again justifying
Definition~\ref{de:IRX}.  The information diagram for the optimal case
$I(X;R)=0$ is shown in Fig.~\ref{fig:IDb}.

We summarize this section in the following three remarks.
\begin{enumerate}
\item Theorems~\ref{th:NextV} and~\ref{th:NewKey} provide strong
  justification of $I(R;UX)$ as the expected key consumption required
  to achieve error free perfect secrecy.  Theorem~\ref{th:LBonEKC}
  shows that the expected key consumption cannot be less than the
  source entropy.  Recall that Theorem~\ref{th:BoundonXR} gives the
  lower bound on the initial key requirement.  Therefore, we have
  distinguished between two different concepts (a) expected key
  consumption in a multi-round system and (b) the initial key
  requirement for a one-shot system. In contrast to the bound $H(R)
  \ge H(U)$~\cite{shannon1949communication,massey1992},
  Theorem~\ref{th:LBonEKC} more precisely describes the role of $H(U)$
  in an error free perfect secrecy system.

\item From \eqref{eq:IUX0}--\eqref{eq:IUR0} we can show that
  \begin{eqnarray}
    H(R) = H(U) + I(X; R) + H(R\given U X).
  \end{eqnarray}
  Thus the key entropy $H(R)$ consists of three parts: the randomness
  used to protect the source, the excess key consumption and the
  residual secret randomness.

\item If the source distribution is uniform, Example~\ref{eg:uniform}
  below shows that the one-time pad achieves minimal key consumption.
\end{enumerate}

\begin{example}[Uniform source distribution] \label{eg:uniform}
  Suppose $U$ and $R$ are independent and are uniformly distributed on
  the sets $\{0, 1, \ldots, 2^i-1\}$ and $\{0, 1, \ldots, 2^j-1\}$,
  respectively, where $i \leq j$.  In order to derive a coding system
  satisfying \eqref{eq:IUX0} -- \eqref{eq:IUR0}, we can first extract
  $i$ random bits $R'$ from $R$ and construct $X$ as the modulo-two
  addition of the binary representation of $U$ and $R'$.  Then
  \begin{align}
    I(R; UX) &= H(R) - H(R\given UX) \\
    &= H(R) - H(R\given R') \\
    &= j - (j - i) \\
    &= H(U).
  \end{align}
\end{example}

\section{Tradeoff between Key Consumption and Number of Channel Uses}
\label{se:Tradeoff}
Example~\ref{eg:uniform} shows that the one-time pad simultaneously
achieves the minimal expected key consumption and the minimum number
of channel uses for a uniform source. However for general non-uniform
sources, we will show that there is a non-trivial tradeoff between
these two quantities.

We will consider two important regimes. First, in
Section~\ref{se:sub1MinIRUX}, we will consider the regime in which
ciphers minimize the key consumption $I(R; UX)$. Conversely, in
Section~\ref{se:sub2MinHX} we consider systems which minimize the
number of channel uses $H(X)$.

We shall demonstrate the existence of a fundamental, non-trivial
tradeoff between the expected key consumption and the number of
channel uses.  Our main results, Theorem~\ref{th:BoundonXR} proved
earlier, and Theorems~\ref{th:LBonEKC} -- \ref{th:minHX} to be proved
below, are summarized in Fig.~\ref{fig:IRUXvsHX}.

Point $\mathsf{1}$ is due to Theorem~\ref{th:minHX} in
Section~\ref{se:sub2MinHX} below, and has the smallest $I(R;UX)$ among
all EPS systems with $H(X)=\log|\calU|$. We shall show that this point
can always be achieved by one-time pad.

Point $\mathsf{2}$ has the smallest $H(X)$ among all the EPS systems
with $I(R; UX)= H(U)$.  For this point, Theorem~\ref{th:IXR0Bound} in
Section~\ref{se:sub1MinIRUX}
 gives the lower bound on $H(X)$ which is strictly greater than $\log
|\calU|$ if $P_U$ is not uniform.

If $P_U$ has only rational probability masses,
Theorem~\ref{th:LBonEKCAch} in Section~\ref{se:sub1MinIRUX} below
shows that Point $\mathsf{3}$ can be achieved by a generalization of
the one-time pad, the \emph{partition code} (to be introduced in
Definition~\ref{def:partitioncode}).

If all the probabilities masses in $P_U$ are the integer multiples of
the smallest probability mass in $P_U$, then Point $\mathsf{2}$
coincides with Point $\mathsf{3}$ by the partition code shown in
Theorem~\ref{th:optimalPart}.  Otherwise, Point $\mathsf{3}$ can
differ from Point $\mathsf{2}$ which will be demonstrated in
Example~\ref{eg:notpartition}.

The existence, continuity and non-increasing in $H(X)$ properties of
the curved portion of the tradeoff curve are established in
Section~\ref{sec:tradeoff}.
\begin{figure}[htbp]
  \begin{center}
    \includegraphics[scale=1]{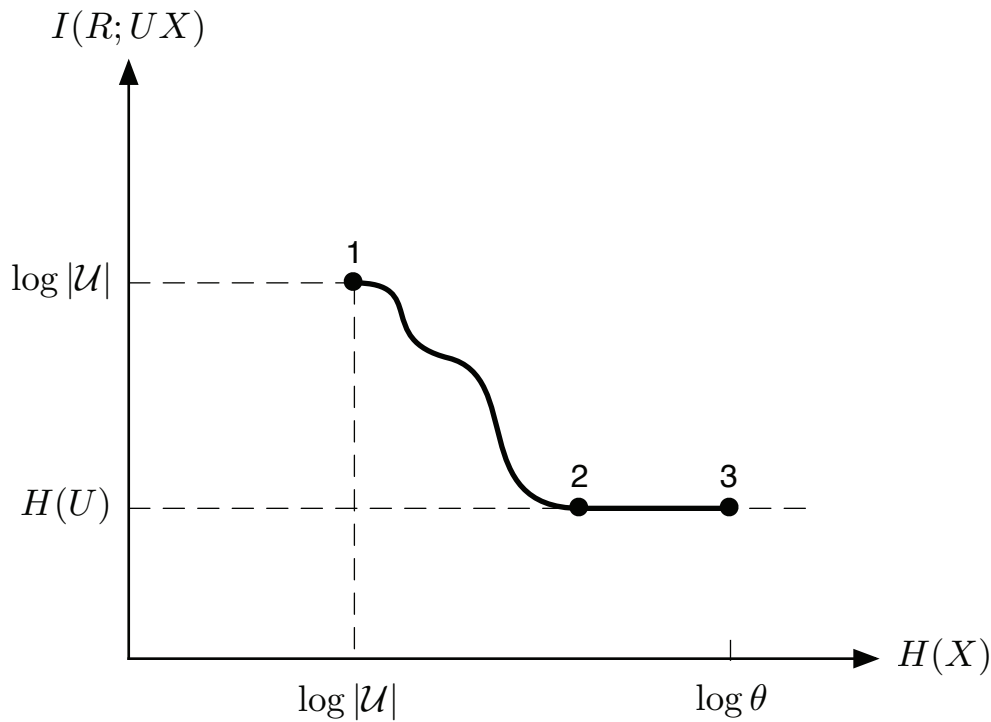} \caption{\label{fig:IRUXvsHX}
      Tradeoff between $I(R;UX)$ and $H(X)$.}
  \end{center}
\end{figure}

\subsection{Minimal expected key consumption}
\label{se:sub1MinIRUX}
We first consider EPS systems which achieve minimal expected key
consumption.  From Theorem~\ref{th:LBonEKC}, an error free perfect
secrecy system with minimal key consumption satisfies
\eqref{eq:IUX0}--\eqref{eq:IUR0} and
\begin{eqnarray}
  I(X; R) = 0. \label{eq:IXR0}
\end{eqnarray}
We now generalize the one-time pad to achieve minimal key consumption
for source distributions containing only rational probability masses.
\begin{definition}[Partition Code $\calC(\Psi)$]\label{def:partitioncode}
  Assume that $U$ is a random variable defined on $\{1,\ldots,
  \ell\}$.  Let $\Psi = (\psi_1, \psi_2, \ldots, \psi_\ell)$ and let
  $\theta = \sum_{i=1}^\ell \psi_i$ where $\psi_i$ and $\theta$ are
  positive integers.  Let $A'$ be a random variable such that
  \begin{equation*}
  \Pr(A' = j \given U=i ) =
  \begin{cases}
    \frac{1}{\psi_{i}} & \text{ if } 1\le j \le \psi_{i}, \\
    \:\: 0 & \text{ otherwise.}
  \end{cases}
  \end{equation*}
  Let $A = \sum_{i=1}^{U-1}\psi_i + A' - 1$, $R$ be uniformly
  distributed on the set $\{0, 1, \ldots, \theta -1 \}$ and $X = A + R
  \mod \theta$.  The so defined cipher system $(R,U,X)$ is called the
  \emph{partition code} $\calC(\Psi)$.
\end{definition}
Note that one-time pad is a special case of partition code when $\Psi
= (1, 1, \ldots, 1)$.

It can be proved directly that a partition code satisfies
\eqref{eq:IUX0} -- \eqref{eq:IUR0} and hence is an EPS system.
Furthermore, we can verify that
\begin{align}
  H(X) &= H(R) = \log \theta, \label{eq:PcodeHXR}\\
  \intertext{and}
  I(R; UX) &= \sum_{i=1}^\ell P_U(i) \log
  \frac{\theta}{\psi_i}, \label{eq:PcodeIRUX}
\end{align}
where~\eqref{eq:PcodeIRUX} is from $H(X\given U, R) = H(A\given U, R)
= \sum_{i=1}^\ell P_U(i) \log {\psi_i}$ together
with~\eqref{eq:PcodeHXR}.

Let $Q_U$ be the probability distribution such that $Q_U(i) =
\psi_i/\theta$.  Then \eqref{eq:PcodeIRUX} can be rewritten as
\begin{eqnarray}
  I(R; UX)  = H(U) + D(P_U \Vert Q_U),  \label{eq:IRUXD}
\end{eqnarray}
where $D(\cdot\Vert \cdot)$ is the relative entropy~\cite{bk:Cover}.
Consequently, we have the following theorem.

\begin{theorem} \label{th:LBonEKCAch} Suppose the probability mass
  $P_U(i)$ is rational for all $i = 1, \ldots, \ell $.  Let $\theta$
  be an integer such that $\theta \cdot P_U(i)$ is also an integer for
  all $i$, and let $\Psi = (\psi_1, \psi_2, \ldots, \psi_\ell)$ with
  $\psi_i = \theta \cdot P_U(i)$.  Then the EPS system $(R,U,X)$
  induced by the partition code $\calC(\Psi)$ achieves the lower bound
  in \eqref{eq:LBonEKC}, namely  $I(R; UX) = H(U)$.
\end{theorem}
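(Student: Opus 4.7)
The plan is to apply the partition-code identity~\eqref{eq:IRUXD} with a judicious choice of $\Psi$, so that the relative-entropy term vanishes. Since equations~\eqref{eq:PcodeHXR} and~\eqref{eq:PcodeIRUX}, and therefore~\eqref{eq:IRUXD}, are already established for every partition code $\calC(\Psi)$, the remaining work is mostly bookkeeping.

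First I would check that the construction is well-posed. Since every $P_U(i)$ is rational, writing $P_U(i) = a_i / b_i$ in lowest terms and taking $\theta$ to be (say) the least common multiple of $\{b_1, \ldots, b_\ell\}$ ensures that $\psi_i := \theta \cdot P_U(i)$ is a positive integer for each $i$. Moreover,
\begin{equation*}
  \sum_{i=1}^{\ell} \psi_i = \theta \sum_{i=1}^{\ell} P_U(i) = \theta,
\end{equation*}
so $\Psi = (\psi_1, \ldots, \psi_\ell)$ is a valid parameter for Definition~\ref{def:partitioncode}, and the partition code $\calC(\Psi)$ is well defined and, as already noted in the text, constitutes an EPS system satisfying~\eqref{eq:IUX0}--\eqref{eq:IUR0}.

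Next I would invoke the identity~\eqref{eq:IRUXD}, which gives $I(R;UX) = H(U) + D(P_U \Vert Q_U)$ with $Q_U(i) = \psi_i / \theta$. By the very choice of $\Psi$,
\begin{equation*}
  Q_U(i) = \frac{\psi_i}{\theta} = \frac{\theta \cdot P_U(i)}{\theta} = P_U(i), \quad \forall i,
\end{equation*}
so $Q_U \equiv P_U$ and hence $D(P_U \Vert Q_U) = 0$. Substituting back yields $I(R;UX) = H(U)$, matching the lower bound of Theorem~\ref{th:LBonEKC}.

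There is no serious obstacle here: all the heavy lifting was done when deriving~\eqref{eq:PcodeHXR}--\eqref{eq:IRUXD}. The only genuine content in this theorem is the observation that rationality of $P_U$ is precisely what lets us tune $\Psi$ so that the induced distribution $Q_U$ equals $P_U$ exactly; without rationality one can only approximate $P_U$ by $Q_U$, and the remaining gap $D(P_U \Vert Q_U) > 0$ then measures the excess key consumption of the best partition code, motivating the subsequent discussion of Point $\mathsf{3}$ versus Point $\mathsf{2}$ in Fig.~\ref{fig:IRUXvsHX}.
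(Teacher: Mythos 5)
Your proof is correct and follows exactly the paper's own route: the paper derives the identity $I(R;UX) = H(U) + D(P_U \Vert Q_U)$ in~\eqref{eq:IRUXD} and then states the theorem as an immediate consequence of choosing $\psi_i = \theta P_U(i)$ so that $Q_U \equiv P_U$ and the divergence vanishes. Your additional bookkeeping (verifying $\sum_i \psi_i = \theta$ and that the $\psi_i$ are positive integers) is a harmless and welcome elaboration of what the paper leaves implicit.
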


%
%
%
%


In the following theorem, we prove that if the source distribution
$P_{U}$ is not rational, then partition code will not achieve zero
key-excess with finite $X$ or $R$. Its proof is deferred to
Section~\ref{se:supportXR}.

\begin{theorem}\label{th:supportXR}
  Suppose $U$, $X$, and $R$ satisfy~\eqref{eq:IUX0} -- \eqref{eq:IUR0}
  and~\eqref{eq:IXR0}.  If there exists $u\in\calU$ such that $P_U(u)$
  is irrational, then the support of $X$ and $R$ cannot be finite.
\end{theorem}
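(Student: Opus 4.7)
The plan is to assume $|\calR| < \infty$ and show that every $P_U(u)$ must then be rational, contradicting the hypothesis. Since the conditions \eqref{eq:IUX0}--\eqref{eq:IUR0} together with \eqref{eq:IXR0} are symmetric in $X$ and $R$, the same argument with the two swapped forces $|\calX|$ to be infinite as well.

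First I would extract the combinatorial structure implied by the conditions. Since $H(U\given RX) = 0$, write $U = g(R,X)$ as in \eqref{eq:cov1}. Fix any $u^\star$ with $P_U(u^\star) > 0$ and define $B_x = \{r : g(r,x) = u^\star\} \subseteq \calR$ and $C_r = \{x : g(r,x) = u^\star\} \subseteq \calX$. Conditioning $P_U(u^\star) = P(U = u^\star)$ on $X = x$ (resp.\ on $R = r$) and using $I(X;R) = 0$ together with $I(U;X) = 0$ (resp.\ $I(U;R) = 0$) yields the two ``balanced'' identities
\begin{align*}
\sum_{r \in B_x} P_R(r) = P_U(u^\star)\ \text{for all } x, \qquad \sum_{x \in C_r} P_X(x) = P_U(u^\star)\ \text{for all } r.
\end{align*}

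Next I would encode these via the $|\calX|\times|\calR|$ incidence matrix $E$ with $E_{x,r} = \mathbf{1}\{r \in B_x\} = \mathbf{1}\{x \in C_r\}$, so that $E P_R = P_U(u^\star)\mathbf{1}$ and $E^{\!\top} P_X = P_U(u^\star)\mathbf{1}$, where $\mathbf{1}$ is the all-ones vector of the appropriate dimension. Because $P_U(u^\star) > 0$, the first equation exhibits $\mathbf{1}$ as lying in the $\mathbb{R}$-column span of $E$. Since $E$ has $\{0,1\}$ entries and $\mathbf{1}$ is rational, $E \lambda = \mathbf{1}$ must also admit some $\lambda \in \mathbb{Q}^{|\calR|}$; I would then multiply the column identity by $\lambda_r$ and swap the order of summation to obtain
\begin{align*}
P_U(u^\star)\sum_{r\in\calR}\lambda_r
= \sum_r \lambda_r \sum_{x\in C_r} P_X(x)
= \sum_x P_X(x) \sum_{r\in B_x}\lambda_r
= \sum_x P_X(x) = 1,
\end{align*}
forcing $P_U(u^\star) = 1/\sum_r \lambda_r \in \mathbb{Q}$ and producing the desired contradiction.

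The only delicate point is the passage from an $\mathbb{R}$-solution to a $\mathbb{Q}$-solution of $E\lambda = \mathbf{1}$, which is the standard fact that a rational-coefficient linear system solvable over $\mathbb{R}$ is solvable over $\mathbb{Q}$; everything else is definition unwinding and a single Fubini-type interchange, legitimate because only finitely many $\lambda_r$ appear, so the swap goes through even if $|\calX| = \infty$. Notice that the entire argument uses only $|\calR| < \infty$; the symmetric swap of the roles of $R$ and $X$ in the conditions then discharges the remaining claim that $|\calX|$ must be infinite as well.
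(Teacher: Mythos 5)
Your argument is correct, and it reaches the conclusion by a noticeably different route from the paper. The paper first reduces to a binary source via an indicator variable, assumes \emph{both} $\calX$ and $\calR$ are finite, and then runs a row-reduction procedure on the $\{0,1\}$ decoding matrix $G$ (repeatedly shrinking the support of $X$ while preserving the constraints) until $G$ is square and invertible, at which point inverting a rational system forces $P_U(1)\in\mathbb{Q}$; the mixed case where only one support is finite is only asserted to be "similarly proved." You instead assume only $|\calR|<\infty$, observe that $I(X;R)=0$ makes the support of $(R,X)$ a full product so the decoder $g$ is defined on all of $\calX\times\calR$, extract the two balance identities $\sum_{r\in B_x}P_R(r)=P_U(u^\star)$ and $\sum_{x\in C_r}P_X(x)=P_U(u^\star)$, and then replace the paper's reduction procedure by the standard fact that a rational linear system in finitely many unknowns solvable over $\mathbb{R}$ is solvable over $\mathbb{Q}$ (valid here even with infinitely many equations, since the rows of the $\{0,1\}$ matrix span a finite-dimensional $\mathbb{Q}$-space, or more simply since there are at most $2^{|\calR|}$ distinct rows), followed by a single interchange of summation that is legitimate because the sum over $r$ is finite and the sums over $x$ are absolutely convergent. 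What your approach buys is twofold: it handles the asymmetric case directly rather than by appeal to similarity, so it genuinely proves that \emph{each} of $\calX$ and $\calR$ must be infinite; and it needs no support-reduction or squareness argument, only the identity $P_U(u^\star)\sum_r\lambda_r=1$ with $\lambda\in\mathbb{Q}^{|\calR|}$. The underlying mechanism --- rational linear algebra on the $\{0,1\}$ decoding matrix forcing $P_U(u)$ to be a reciprocal of a rational --- is the same in both proofs, but your execution is cleaner and strictly more complete than the paper's.
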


Although it is difficult to construct codes satisfying~\eqref{eq:IUX0} -- \eqref{eq:IUR0} and~\eqref{eq:IXR0} for $P_U$ having irrational
probability masses, Theorem~\ref{th:LBonEKC} still gives a tight bound
on $I(R; UX)$ as shown in the following theorem.
\begin{theorem} \label{co:irrational3} Suppose the support of $P_{U}$
  is a finite set of integers $\{1, \ldots, \ell\}$. Let $\Psi =
  (\psi_1, \ldots, \psi_{\ell+1})$ with
  \begin{equation*} \psi_i=
    \begin{cases}  \lfloor P_U(i)\theta \rfloor, & 1\leq i
      \leq \ell,\\
      \theta-\sum_{i=1}^{\ell}\lfloor P_U(i)\theta \rfloor, &   i=\ell+1.\\
    \end{cases}
  \end{equation*}
  Assume that $\theta$ is large enough such that $\lfloor P_U(i)\theta
  \rfloor \geq 1$ for all $1\le i \le \ell$.  For the partition code
  $\calC(\Psi)$, $I(R; UX) \rightarrow H(U)$ as $\theta \rightarrow
  \infty$.
\end{theorem}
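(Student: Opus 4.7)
The plan is to apply the partition-code key-consumption formula \eqref{eq:PcodeIRUX} to $\calC(\Psi)$ and then show that the rational-approximation error vanishes as $\theta\to\infty$. I view $U$ as supported on $\{1,\ldots,\ell+1\}$ with $P_U(\ell+1)=0$. Since $\sum_{i=1}^{\ell+1}\psi_i=\theta$ by construction, and each $\psi_i$ with $i\leq \ell$ is a positive integer by the hypothesis $\lfloor P_U(i)\theta\rfloor\geq 1$, $\calC(\Psi)$ is a well-defined partition code, so the derivation leading to \eqref{eq:PcodeIRUX} applies.

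Invoking \eqref{eq:PcodeIRUX} and using the convention $0\log(\cdot)=0$ to discard the vanishing $(\ell+1)$-th summand, I obtain
\begin{equation*}
I(R;UX) = \sum_{i=1}^{\ell} P_U(i)\log\frac{\theta}{\lfloor P_U(i)\theta\rfloor} = H(U) + \sum_{i=1}^{\ell} P_U(i)\log\frac{P_U(i)\theta}{\lfloor P_U(i)\theta\rfloor},
\end{equation*}
where the second equality uses the factorization $\theta/\lfloor P_U(i)\theta\rfloor = (1/P_U(i))\cdot\bigl(P_U(i)\theta/\lfloor P_U(i)\theta\rfloor\bigr)$ and the definition of $H(U)$.

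It then remains to show that the error term $\sum_{i=1}^{\ell} P_U(i)\log\bigl(P_U(i)\theta/\lfloor P_U(i)\theta\rfloor\bigr)$ tends to $0$. For each $i$ with $P_U(i)>0$, the elementary estimate $P_U(i)\theta-1<\lfloor P_U(i)\theta\rfloor\leq P_U(i)\theta$ squeezes the ratio $P_U(i)\theta/\lfloor P_U(i)\theta\rfloor$ between $1$ and $P_U(i)\theta/(P_U(i)\theta-1)$, both of which tend to $1$ as $\theta\to\infty$; each logarithm summand therefore vanishes, and the sum being finite (with coefficients $P_U(i)\leq 1$) the total error vanishes as well. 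Hence $I(R;UX)\to H(U)$. The only delicate point---hardly an obstacle---is verifying that \eqref{eq:PcodeIRUX} is unaffected by the inclusion of a ``slack'' cell $\psi_{\ell+1}$ with $P_U(\ell+1)=0$, which reduces to the standard convention $0\log(\cdot)=0$; in the degenerate case $\psi_{\ell+1}=0$ (when all $P_U(i)\theta$ are already integers) the claim collapses to Theorem~\ref{th:LBonEKCAch}.
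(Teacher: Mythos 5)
Your proposal is correct and follows essentially the same route as the paper: the error term $\sum_{i} P_U(i)\log\bigl(P_U(i)\theta/\lfloor P_U(i)\theta\rfloor\bigr)$ you isolate is exactly $D(P_U\Vert Q_U)$ with $Q_U(i)=\psi_i/\theta$, which is how the paper phrases it via \eqref{eq:IRUXD}, and the paper likewise concludes by letting $Q_U\to P_U$ pointwise. Your explicit floor-function squeeze and your handling of the slack cell $\psi_{\ell+1}$ (via $P_U(\ell+1)=0$) just make the paper's one-line convergence argument fully rigorous.
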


\begin{proof}
  Consider a probability distribution $Q_U$ with $Q_U(i) =
  {\psi_i}/{\theta}$ for $1 \leq i \leq \ell+ 1$.  As $\theta
  \rightarrow \infty$, $Q_U$ converges pointwise to $P_U$ and hence
  $D(P_U \Vert Q_U) \rightarrow 0$ for finite $\ell$.  The theorem
  thus follows from~\eqref{eq:IRUXD}.
\end{proof}

In addition to minimizing the key consumption $I(R;UX)$, we may also
want to simultaneously minimize $H(X)$, which is the number of channel
uses required to convey the ciphertext $X$. The following theorem and
corollary illustrate that the zero key-excess condition can be very
harsh, requiring the EPS system to have a very large $H(R)$ and $H(X)$,
even for very simple sources.

\begin{theorem}[EPS systems with minimal $I(R;UX)$]
  \label{th:IXR0Bound}
  Let $\calX$, $\calR$ and $\calU$ be the respective supports of
  random variables $X$, $R$, and $U$ satisfying
  \eqref{eq:IUX0}--\eqref{eq:IUR0} and \eqref{eq:IXR0}. Then
  \begin{align}
    \max_{x\in\calX}P_X(x) &\leq \min_{u\in\calU } P_U(u) \label{eq:minIXR0PX}\\
  \intertext{and}
    \max_{r\in\calR} P_R(r) &\leq \min_{u\in\calU} P_U(u). \label{eq:minIXR0PR}
  \end{align}
\end{theorem}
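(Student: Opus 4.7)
The plan is to leverage the extra structure that appears once \eqref{eq:IXR0} is added to \eqref{eq:IUX0}--\eqref{eq:IUR0}: the triple $(U,R,X)$ is now pairwise independent, while \eqref{eq:HURX0} still gives a deterministic decoder $U = g(R,X)$, exactly as in \eqref{eq:cov1}. The key structural fact I will extract is that, for every $r$ in the support of $R$, the ``slice'' $g(r,\cdot)$ partitions $\calX$ into cells whose $P_X$-masses are exactly the source probabilities $P_U(u)$.

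To make this precise, fix $r\in\calR$ and set $N_r^u = \{x \in \calX : g(r,x) = u\}$ for $u\in\calU$. Using $I(U;R)=0$, the chain rule, the functional relation $U=g(R,X)$, and $I(X;R)=0$,
\begin{align*}
P_U(u) = P_{U\given R}(u\given r)
&= \sum_{x} P_{X\given R}(x\given r)\,\mathbf{1}\{g(r,x)=u\} \\
&= \sum_{x\in N_r^u} P_X(x) \;=\; P_X(N_r^u).
\end{align*}
Because \eqref{eq:IUX0}--\eqref{eq:IUR0} together with \eqref{eq:IXR0} are symmetric under swapping $X$ and $R$ (note $H(U\given RX)=H(U\given XR)$), the same computation with the roles reversed yields the companion identity $P_R\bigl(\{r:g(r,x)=u\}\bigr) = P_U(u)$ for every $x$ in the support of $X$.

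With these two partition identities in hand, the bounds follow by a short pigeonhole argument. Fix $x\in\calX$ and any $u\in\calU$. Since $P_U(u)>0$, the identity $\sum_{r:\,g(r,x)=u} P_R(r) = P_U(u)$ forces the existence of at least one $r^\ast$ with $P_R(r^\ast)>0$ and $g(r^\ast,x)=u$. For that $r^\ast$, the singleton $\{x\}$ lies inside $N_{r^\ast}^u$, hence
\begin{align*}
P_X(x) \;\le\; P_X\bigl(N_{r^\ast}^u\bigr) \;=\; P_U(u).
\end{align*}
Minimising over $u\in\calU$ establishes \eqref{eq:minIXR0PX}, and the $X\leftrightarrow R$ symmetric argument establishes \eqref{eq:minIXR0PR}.

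The only real obstacle is conceptual rather than computational: one must spot that pairwise independence of $(U,R,X)$ together with $U=g(R,X)$ forces each slice $g(r,\cdot)$ to carve $\calX$ into cells of mass exactly $P_U(u)$. Once that partition identity is visible, no further inequality, no appeal to majorisation as in Theorem~\ref{th:BoundonXR}, and no construction is needed -- the claim drops out immediately.
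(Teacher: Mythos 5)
Your proof is correct and is essentially the paper's argument in different packaging: the paper fixes $(u,x)$, uses $P_{UX}(u,x)=P_U(u)P_X(x)>0$ to produce an $r$ with $g(r,x)=u$, and then chains $P_X(x)P_R(r)=P_{UXR}(u,x,r)\leq P_{UR}(u,r)=P_U(u)P_R(r)$, which is exactly your inequality $P_X(\{x\})\leq P_X(N_{r}^u)=P_U(u)$ after dividing by $P_R(r)$. Your explicit partition identities $P_X(N_r^u)=P_U(u)$ are a nice way to expose the structure, but the logical skeleton is the same.
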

\begin{proof}
  Consider any $u\in\calU$ and $x\in\calX$. By definition, $P_{U}(u) >
  0$ and $P_X(x) > 0$. From~\eqref{eq:IUX0}, we have $P_{UX}(u, x) =
  P_U(u) P_{X}(x) > 0$.  Consequently, there exists $r\in\calR$ such
  that $P_{UXR}(u, x, r) > 0$.  Notice that
  \begin{align}
    P_{UXR}(u, x, r)  &= P_{XR}(x, r)   \label{eq:thmx29}\\
    &= P_{X}(x) P_{R}(r), \label{eq:thmx299}
  \end{align}
  where~\eqref{eq:thmx29} is due to~\eqref{eq:HURX0}
  and~\eqref{eq:thmx299} is due to~\eqref{eq:IXR0}. On the other hand,
  \begin{align}
    P_{UXR}(u, x, r) &\leq P_{UR}(u, r) \label{eq:thmx2}\\
    &= P_{U}(u) P_{R}(r), \label{eq:thmx3}
  \end{align}
  where~\eqref{eq:thmx3} is due to~\eqref{eq:IUR0}. Finally, as
  $P_{R}(r)>0$, we have $P_{X}(x) \le P_{U}(u)$
  and~\eqref{eq:minIXR0PX} follows.  Due to the symmetric roles of $X$
  and $R$, the theorem is proved.
\end{proof}

The results in Theorem~\ref{th:IXR0Bound} are used to obtain bounds on
$H(X)$ and $H(R)$ in the following corollary.  Define the binary
entropy function, $h(\gamma) = -\gamma \log {\gamma} - (1-\gamma) \log
(1-\gamma)$ for $0 <\gamma < 1$ and $h(0) = h(1) = 0$.

\begin{corollary}
  \label{co:IXR0Bound}
  Let $\calX$, $\calR$ and $\calU$ be the respective supports of
  random variables $X$, $R$, and $U$ satisfying~\eqref{eq:IUX0} --
  \eqref{eq:IUR0} and~\eqref{eq:IXR0}. Then
  \begin{align}
    \min \{H(X), H(R) \} \geq& h ( \pi \lfloor \pi^{-1} \rfloor ) +
    \pi \lfloor \pi^{-1} \rfloor \log \lfloor \pi^{-1}
    \rfloor \label{eq:minIXR0HXHR} \\
    \geq& \log \frac{1}{\pi} \label{eq:minIXR0HXHR2},
  \end{align}
  where  $\pi = \min_{u\in\calU } P_U(u)$ and the right sides of
  \eqref{eq:minIXR0HXHR} and \eqref{eq:minIXR0HXHR2} are equal if and
  only if $\pi^{-1} $ is an integer.
\end{corollary}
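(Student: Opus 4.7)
The plan is to (i) reduce the corollary to a pure entropy-minimization problem over distributions with a bounded maximum, (ii) solve that problem via majorization, (iii) simplify the resulting expression to match~\eqref{eq:minIXR0HXHR}, and finally (iv) compare with $\log(1/\pi)$ to obtain~\eqref{eq:minIXR0HXHR2} together with the stated equality condition.

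For step (i), Theorem~\ref{th:IXR0Bound} already gives $\max_{x} P_X(x) \le \pi$ and $\max_{r} P_R(r) \le \pi$. Hence both $H(X)$ and $H(R)$ are bounded below by
\[
\phi(\pi) \;\triangleq\; \inf\{\,H(P):\ P\text{ a probability mass function with }\max_i P(i)\le \pi\,\}.
\]
The roles of $X$ and $R$ are symmetric, so it suffices to evaluate $\phi(\pi)$.

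For step (ii), let $k=\lfloor \pi^{-1}\rfloor$ and define the candidate distribution
\[
P^\star = \bigl(\underbrace{\pi,\pi,\ldots,\pi}_{k\text{ times}},\; 1-k\pi,\; 0,\,0,\ldots\bigr),
\]
which is admissible because $1-k\pi < \pi$ (since $(k+1)\pi>1$). Given any admissible $P$, sort its masses in non-increasing order $p_{(1)}\ge p_{(2)}\ge\cdots$. For $n\le k$ we have $\sum_{i=1}^n p_{(i)} \le n\pi = \sum_{i=1}^n P^\star_{(i)}$ since each $p_{(i)} \le \pi$, and for $n\ge k+1$ the partial sum of $P^\star$ already equals $1$. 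Hence $P^\star$ majorizes $P$ in the usual partial-sum sense. Schur-concavity of Shannon entropy (valid on countable alphabets, as invoked in the paper via \cite[Theorem~10]{FanoJ}) then gives $H(P)\ge H(P^\star)$, so $\phi(\pi)=H(P^\star)$.

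For step (iii), compute directly:
\begin{align*}
H(P^\star) &= -k\pi\log \pi - (1-k\pi)\log(1-k\pi) \\
&= -k\pi \log \tfrac{k\pi}{k} - (1-k\pi)\log(1-k\pi) \\
&= \bigl[-k\pi\log k\pi - (1-k\pi)\log(1-k\pi)\bigr] + k\pi \log k \\
&= h(\pi\lfloor \pi^{-1}\rfloor) + \pi\lfloor \pi^{-1}\rfloor \log \lfloor \pi^{-1}\rfloor,
\end{align*}
yielding~\eqref{eq:minIXR0HXHR}. For step (iv), the slicker bound follows from the min-entropy inequality $H(X)\ge -\log\max_x P_X(x)\ge -\log\pi$; equivalently, one can verify algebraically that with $q=k\pi\in(0,1]$,
\[
H(P^\star)-\log\tfrac{1}{\pi} = (1-q)\,\log\!\frac{q}{(1-q)k},
\]
which is non-negative because $(k+1)\pi\ge 1$ forces $q\ge (1-q)k$. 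The right-hand side vanishes precisely when $1-q=0$, i.e., when $\pi^{-1}$ is an integer, giving the equality characterization.

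The main technical step is the majorization argument in (ii); the rest is routine algebra. The only subtlety I anticipate is making the Schur-concavity invocation rigorous on a countably infinite support, but this can be handled exactly as in the proof of Theorem~\ref{th:BoundonXR}, where \cite[Theorem~10]{FanoJ} is explicitly noted to extend to that setting.
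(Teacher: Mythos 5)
Your proof is correct and follows essentially the same route as the paper: both rest on Theorem~\ref{th:IXR0Bound} together with the majorization/Schur-concavity entropy bound (which the paper obtains by directly citing \cite[Theorem~10]{FanoJ}, and which you re-derive by exhibiting the extremal distribution $P^\star$ and computing its entropy), followed by elementary algebra to compare with $\log(1/\pi)$ and extract the equality condition. One cosmetic point: your ``precisely when $1-q=0$'' claim in step (iv) needs the strict inequality $(k+1)\pi>1$ (which does hold, since $\pi^{-1}<k+1$ by definition of the floor), not merely $(k+1)\pi\ge 1$ as written, since otherwise you only get nonnegativity rather than strict positivity when $1-q>0$.
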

\begin{proof}
  From~\eqref{eq:minIXR0PX}, $\max_{x\in\calX}P_X(x) \leq
  \min_{u\in\calU } P_U(u)$.  Together with~\cite[Theorem~10]{FanoJ},
  this establishes~\eqref{eq:minIXR0HXHR}.  To
  prove~\eqref{eq:minIXR0HXHR2}, we first consider the case when
  $\pi^{-1}$ is an integer.  Then
  \begin{align*}
    h( \pi \lfloor \pi^{-1} \rfloor ) + \pi \lfloor \pi^{-1} \rfloor
    \log \lfloor \pi^{-1} \rfloor
    =& h(1) + \pi \pi^{-1} \log \frac{1}{\pi}\\
    =& \log \frac{1}{\pi}.
  \end{align*}
  If $\pi^{-1}$ is not an integer, then
  \begin{equation*}
    1 - \pi \lfloor \pi^{-1} \rfloor < \pi.
  \end{equation*}
  Hence,
  \begin{eqnarray*}
    \lefteqn{h( \pi \lfloor \pi^{-1} \rfloor ) + \pi \lfloor \pi^{-1}
      \rfloor \log \lfloor \pi^{-1} \rfloor} \\
    &=& \pi \lfloor \pi^{-1} \rfloor \log \frac{1}{\pi \lfloor \pi^{-1}
      \rfloor} + (1- \pi \lfloor \pi^{-1} \rfloor) \log \frac{1}{1-\pi \lfloor
      \pi^{-1} \rfloor} +
    \pi \lfloor \pi^{-1} \rfloor \log \lfloor \pi^{-1} \rfloor\\
    &>& \pi \lfloor \pi^{-1} \rfloor \log \frac{1}{\pi }
    + (1- \pi \lfloor \pi^{-1} \rfloor) \log \frac{1}{\pi} \\
    &=& \log \frac{1}{\pi}.
  \end{eqnarray*}
  Furthermore, the right hand sides of~\eqref{eq:minIXR0HXHR}
  and~\eqref{eq:minIXR0HXHR2} are equal only if $\pi^{-1}$ is an
  integer.  This proves the lower bounds on $H(X)$.  Due to the
  symmetric roles of $X$ and $R$, the theorem is proved.
\end{proof}

Suppose $P_U$ is not uniform so that $\min_{u\in\calU } P_U(u) <
|\calU|^{-1}$.  In this case, \eqref{eq:minIXR0HXHR2} shows that
\begin{eqnarray}
  \min\{H(X), H(R)\}  > \log |\calU|.
\end{eqnarray}
Comparing with~\eqref{eq:thm1hx} and~\eqref{eq:thm1hr} in
Theorem~\ref{th:BoundonXR}, a larger initial key requirement and a
larger number of channel uses are required for systems which achieve
the minimal expected key consumption.  The following theorem shows
that the lower bounds in~\eqref{eq:minIXR0HXHR2} can be achieved for
certain $P_U$ including the uniform distribution and $D$-adic
distributions, $P_U(u) = D^{-i}$ for certain integers $D$ and~$i$.
\begin{theorem}
  \label{th:optimalPart}
  Let $\calU =\{1, \ldots, \ell\}$ and let $P_U(\ell) \leq P_U(i)$ for
  $1 \leq i \leq \ell$.  If there exists a set of positive integers
  $\Psi = \{\psi_i\}$ such that $P_U(i) = \psi_i P_U(\ell)$ for $1
  \leq i \leq \ell$, then the partition code $\calC(\Psi)$
  simultaneously achieves the minimum $H(X)$ and $H(R)$ among all EPS
  systems achieving minimal key consumption.
\end{theorem}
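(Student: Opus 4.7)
The plan is to argue that under the hypothesis, the minimum probability mass $\pi = P_U(\ell)$ has an integer reciprocal equal to $\theta = \sum_{i=1}^\ell \psi_i$, and then invoke Corollary~\ref{co:IXR0Bound} to show that $\log\theta$ is a lower bound on both $H(X)$ and $H(R)$ across all EPS systems with $I(R;UX) = H(U)$. Matching this lower bound with what the partition code actually attains finishes the proof.

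First I would unpack the hypothesis: since $P_U(i) = \psi_i P_U(\ell)$ for every $i\in\{1,\ldots,\ell\}$, summing over $i$ gives
\begin{equation*}
  1 = \sum_{i=1}^\ell P_U(i) = P_U(\ell)\sum_{i=1}^\ell \psi_i = P_U(\ell)\,\theta.
\end{equation*}
Therefore $\pi := \min_u P_U(u) = P_U(\ell) = 1/\theta$, so $\pi^{-1}=\theta$ is a positive integer. This is the key consequence of the hypothesis that makes Corollary~\ref{co:IXR0Bound} tight.

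Next, for any EPS system $(R', U, X')$ satisfying \eqref{eq:IUX0}--\eqref{eq:IUR0} together with the minimal key consumption condition $I(R';X')=0$ (equivalently $I(R';UX')=H(U)$ by Theorem~\ref{th:LBonEKC}), Corollary~\ref{co:IXR0Bound} yields
\begin{equation*}
  \min\{H(X'), H(R')\} \ge \log(1/\pi) = \log\theta,
\end{equation*}
where equality in the bound $h(\pi\lfloor\pi^{-1}\rfloor) + \pi\lfloor\pi^{-1}\rfloor\log\lfloor\pi^{-1}\rfloor \ge \log(1/\pi)$ holds because $\pi^{-1}$ is an integer. So $\log\theta$ is a simultaneous lower bound on $H(X')$ and $H(R')$.

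Finally I would verify that the partition code $\calC(\Psi)$ actually meets this bound and is feasible. By construction $\theta P_U(i) = \psi_i$ is a positive integer for each $i$, so Theorem~\ref{th:LBonEKCAch} applies and $\calC(\Psi)$ is an EPS system with $I(R;UX)=H(U)$, i.e., minimal expected key consumption. By \eqref{eq:PcodeHXR}, $H(X) = H(R) = \log\theta$, which exactly matches the lower bound derived above. Hence $\calC(\Psi)$ simultaneously attains the minimum $H(X)$ and the minimum $H(R)$ over all EPS systems with minimal key consumption. There is no substantive obstacle here; the only subtle point is recognizing that the divisibility hypothesis is exactly what forces $\pi^{-1}$ to be an integer and thereby makes the second inequality in Corollary~\ref{co:IXR0Bound} tight.
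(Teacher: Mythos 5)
Your proposal is correct and follows essentially the same route as the paper's own proof: it invokes Corollary~\ref{co:IXR0Bound} (specifically \eqref{eq:minIXR0HXHR2}) with $\pi = P_U(\ell) = 1/\theta$ to get the lower bound $\log\theta$ on both $H(X)$ and $H(R)$, and then matches it against \eqref{eq:PcodeHXR} for the partition code. Your added checks --- that the divisibility hypothesis forces $\pi^{-1}=\theta$ to be an integer, and that feasibility with $I(R;UX)=H(U)$ follows from Theorem~\ref{th:LBonEKCAch} --- are details the paper leaves implicit, and they are handled correctly.
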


\begin{proof}
  Suppose $(R, U, X)$ satisfies~\eqref{eq:IUX0} -- \eqref{eq:IUR0}
  and~\eqref{eq:IXR0} so that $H(X) \geq \log \frac{1}{P_U(\ell)}$
  from~\eqref{eq:minIXR0HXHR2}.  Note that $P_U(\ell) =
  (\sum_{i=1}^\ell \psi_i)^{-1}$ from the definition of $\Psi$.
  Therefore
  \begin{equation}
    H(X) \geq \log \left(\sum_{i=1}^\ell \psi_i \right). \label{eq:lBonHX}
  \end{equation}
  The partition code $\calC(\Psi)$ has $\theta = \sum_{i=1}^\ell \psi_i$
  so that it can achieve equality in \eqref{eq:lBonHX} from
  \eqref{eq:PcodeHXR}.  Similarly, we can argue that the partition
  code $\calC(\Psi)$ achieves the minimum $H(R)$.
\end{proof}

For some other source distributions $P_U$, the partition code may not
achieve the minimal number of channel uses $H(X)$, as illustrated in
the following example.
\begin{example}
  \label{eg:notpartition}
  Consider an EPS system $(R,U,X)$ such that
  \begin{enumerate}
  \item $U$ is a binary random variables where  $P_{U}(0) = 3/5$.
  \item $X$ and $R$ take values from the set $\{0,1,2,3\}$.
  \item $P_{X}(0) =P_{R}(0) = 2/5$, $P_{X}(i) =P_{R}(i) = 1/5$
    for $i=1,2,3$.
  \item $I(X;R) = 0$ so that $P_{XR}(xr) = P_X(x) P_R(r)$ for all $x$ and $r$.
  \item $U$ is a function of $(X,R)$ such that $U=0$ if and only if
    (i) $X = 0$ and $R \neq 0$, or (ii) $R=0$ and $X\neq 0$, or (iii)
    $X=R\neq 0$. Consequently, $P_{U|XR}(u\given x,r)$ is well-defined.
  \end{enumerate}
  It is straightforward to check that $\{U,X,R\}$ satisfies
  \eqref{eq:IUX0} -- \eqref{eq:IUR0} and~\eqref{eq:IXR0} and $H(X) =
  H(R) < \log 5$.  However $\theta = 5$ is the smallest integer such
  that $\theta \cdot P_U(u)$ is an integer.  In this example, $H(X)$
  is smaller than the value given in \eqref{eq:PcodeHXR}.  While
  Theorem~\ref{th:optimalPart} shows that partition code can
  simultaneously minimize $H(X)$ and $H(R)$ under the conditions
  \eqref{eq:IUX0} -- \eqref{eq:IUR0} and~\eqref{eq:IXR0}, this
  example shows that partition code is not necessarily optimal in
  terms of minimizing $H(X)$ for a general source.
\end{example}

\subsection{Minimal number of channel uses}
\label{se:sub2MinHX}

In the previous subsection, we proposed partition codes $\calC(\Psi)$
which minimize the expected key consumption for error free perfect
secrecy systems. However, we also demonstrated that these codes do not
guarantee the minimal number of channel uses $H(X)$, among all other
EPS systems which also minimize the expected key consumption. Finding
an EPS system which minimizes the number of channel uses for a given
expected key consumption is a very challenging open problem. In this
subsection, we aim to minimize $I(R;UX)$ in the regime where $H(X)$
meets the lower bound in Theorem~\ref{th:BoundonXR}, $H(X) = \log
|\calU|$. Unlike in Section~\ref{se:sub1MinIRUX}, we can completely
characterize this regime.

Using Theorem~\ref{th:su}, we can show that by using one-time pad,
\begin{equation*}
  H(U) \le \log |\calU| = H(X) = H(R) = I(R; UX).
\end{equation*}
Therefore, in this instance, the expected key consumption $I(R; UX)$
is not minimal when the source $U$ is not uniform.  However, the
following theorem shows that among all EPS systems which minimize the
number of channel uses, the one-time pad minimizes the expected key
consumption.
\begin{theorem}
  \label{th:minHX}
  Consider any EPS system $(R,U,X)$ (e.g., one-time pad) with $H(X) =
  \log |\calU|$. Then $I(R; UX) = \log |\calU|$ and $H(X|RU) = 0$.
\end{theorem}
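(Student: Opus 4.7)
The plan is to use the equality case of Theorem~\ref{th:BoundonXR} together with the equality case of the chain of inequalities in its proof. Since the hypothesis gives $H(X)=\log|\calU|$, the ``if and only if'' clause of~\eqref{eq:thm1hx} immediately tells me that $P_X$ is uniform on $\calX$ with $|\calX|=|\calU|$, so $P_X(x)=|\calU|^{-1}$ for every $x\in\calX$.

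Next, I revisit the chain \eqref{eq:cov-1}--\eqref{eq:cov4} in the proof of Theorem~\ref{th:BoundonXR}. Substituting $P_X(x)=|\calU|^{-1}$ makes the left-hand side equal to $1$, which forces equality in both inequality steps: in particular, in the step where $P_{X\mid UR}(x\mid g(r,x),r)\le 1$ is bounded above by $1$. Thus for every $x\in\calX$ and every $r$ with $P_{RX}(r,x)>0$ I get $P_{X\mid UR}(x\mid g(r,x),r)=1$. I then have to argue that this pointwise statement actually implies $H(X\mid RU)=0$. The bookkeeping is: for any $(u,r)$ with $P_{UR}(u,r)>0$ (which by \eqref{eq:IUR0} is exactly any $u\in\calU, r\in\calR$), the conditional law $P_{X\mid UR}(\cdot\mid u,r)$ is a probability distribution on $\calX$; any $x$ in its support must satisfy $u=g(r,x)$ by \eqref{eq:cov1}, and on such $x$'s the value is $1$. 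Hence the support is a single atom and $X$ is a deterministic function of $(U,R)$, yielding $H(X\mid RU)=0$.

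Having established $H(X\mid RU)=0$, the value of $I(R;UX)$ follows from a short chain-rule computation. Writing
\begin{align*}
I(R;UX) &= H(UX)-H(UX\mid R)\\
&= \bigl(H(U)+H(X)\bigr)-\bigl(H(U\mid R)+H(X\mid UR)\bigr)\\
&= \bigl(H(U)+\log|\calU|\bigr)-\bigl(H(U)+0\bigr)=\log|\calU|,
\end{align*}
where the second equality uses $I(U;X)=0$ from \eqref{eq:IUX0} and the third uses $I(U;R)=0$ from \eqref{eq:IUR0} together with the step just proved.

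The main obstacle is the middle step: converting the pointwise equality $P_{X\mid UR}(x\mid g(r,x),r)=1$, extracted from the tightness of \eqref{eq:cov3}--\eqref{eq:cov4}, into the global statement $H(X\mid RU)=0$. The delicate point is that the indices $(u,r)$ over which $H(X\mid RU)$ is computed need not line up with the pairs $(g(r,x),r)$ produced by the decoder $g$, so I must check that as $x$ ranges over $\calX$, the pairs $(g(r,x),r)$ cover every $(u,r)\in\calU\times\calR$ with positive $P_{UR}$-mass. Once that surjectivity is verified (using the independence $U\perp R$ and the fact that $P_{X\mid UR}(\cdot\mid u,r)$ must sum to $1$ on $\calX$), the rest is pure mechanical substitution.
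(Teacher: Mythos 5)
Your proposal is correct, and it reaches the key intermediate fact $H(X\mid UR)=0$ by a genuinely different mechanism than the paper. The paper's proof is combinatorial: from $H(X)=\log|\calU|$ it extracts $|\calX|=|\calU|$, defines the sets $\calX_{ru}=\{x:P_{RUX}(r,u,x)>0\}$, observes via the decoder that $\calX_{ri}\cap\calX_{rj}=\emptyset$ for $i\neq j$, and sandwiches $|\calU|=|\calX|\ge\sum_u|\calX_{ru}|\ge|\calU|\min_u|\calX_{ru}|$ against $|\calX_{ru}|\ge 1$ (which uses $I(U;R)=0$) to conclude each $\calX_{ru}$ is a singleton. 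You instead run the equality-case analysis of the chain \eqref{eq:cov-1}--\eqref{eq:cov4}: uniformity of $P_X$ makes the chain saturate, forcing $P_{X\mid UR}(x\mid g(r,x),r)=1$ whenever $P_{RX}(r,x)>0$, and then for any $(u,r)$ of positive mass the nonempty support of $P_{X\mid UR}(\cdot\mid u,r)$ must consist of a single atom of mass $1$. The ``delicate point'' you flag at the end is in fact already resolved by your own bookkeeping: you do not need the pairs $(g(r,x),r)$ to cover all of $\calU\times\calR$ a priori; you only need that each conditional law $P_{X\mid UR}(\cdot\mid u,r)$ has nonempty support, and \eqref{eq:cov1} then forces $u=g(r,x)$ for any $x$ in that support, at which point the saturation gives mass $1$. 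Your route has the mild advantage of not needing the cardinality identity $|\calX|=|\calU|$ (only the pointwise bound $P_X(x)=|\calU|^{-1}$), while the paper's counting argument is shorter once Theorem~\ref{th:BoundonXR} is in hand. The concluding computation of $I(R;UX)=\log|\calU|$ is the same standard chain-rule manipulation in both.
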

\begin{proof}
  If $H(X) = \log |\calU|$, $P_X(x) = {1}/{|\calU|}$ for $x \in \calX$ and
  \begin{equation}
    |\calX| = |\calU|   \label{eq:calXM}
  \end{equation}
  from Theorem~\ref{th:BoundonXR}.  Let
  \begin{equation*}
    \calX_{ru} = \{x \in \calX: P_{RUX}(r, u, x) > 0 \}
  \end{equation*}
  be the set of possible values of $X$ when $R = r$ and $U = u$.
  Due to \eqref{eq:cov1}, $\calX_{ri} \cap \calX_{rj} = \emptyset$ if
  $i \neq j$.  Together with \eqref{eq:calXM},
  \begin{equation}
    |\calU| = |\calX| \geq \left|\bigcup_u \calX_{ru} \right| = \sum_u
    |\calX_{ru}| \geq |\calU|\min_u |\calX_{ru}|. \label{eq:Sru1}
  \end{equation}
  On the other hand, for any $r \in \calR$ and $u \in \calU$
  \begin{equation*}
    \sum_{x \in \calX_{ru}} P_{RUX}(r, u, x) = P_{UR}(u,r) = P_{U}(u)P_R(r) > 0
  \end{equation*}
  from~\eqref{eq:IUR0}, and hence, $|\calX_{ru}| \geq 1$.
  Substituting this result into~\eqref{eq:Sru1} shows that $|\calX_{ru}| =
  1$.  Therefore, $X$ is a function of $R$ and $U$, which verifies
  \begin{eqnarray}
    H(X\given UR) = 0.
  \end{eqnarray}
  Together with \eqref{eq:IUX0} -- \eqref{eq:IUR0}, it is easy to
  verify that $I(R; UX) = H(X) = \log |\calU|$.
\end{proof}


\subsection{The fundamental tradeoff}\label{sec:tradeoff}
An important open problem is to find coding schemes which can achieve
points on the tradeoff curve between Points $\mathsf{1}$ and
$\mathsf{2}$ in Figure~\ref{fig:IRUXvsHX}.  For a given source
distribution $P_U$ and number of channel uses
$H(X)=\log|\calU|+\gamma$, with $\gamma\geq 0$ we need to solve the
following optimization problem,
\begin{equation}
  f(\gamma) = \inf_{P_{RX|U} \in \calP_\gamma} I(R; UX), \label{eq:minIRUX1}
\end{equation}
where
\begin{equation}\label{eq:feasibleset}
  \calP_\gamma = \left\{P_{RX|U}: I(R; U) = I(X; U)  = H(U|X, R) = 0,
    H(X) = \log |\calU| + \gamma \right\}
\end{equation}
is the set of feasible conditional distributions yielding an EPS
system with the specified number of channel uses.

Solving~\eqref{eq:minIRUX1} remains open in general, however two
important structural properties of $f(\gamma)$ are given in the
following theorem.
\begin{proposition}
  Let $P_U$ and $\gamma\geq 0$ be given. Then $\calP_\gamma$ defined
  in~\eqref{eq:feasibleset} is non-empty for $\gamma \geq 0$, and
  $f(\gamma)$ defined in~\eqref{eq:minIRUX1} is non-increasing in
  $\gamma$.
\end{proposition}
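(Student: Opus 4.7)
The plan is to prove both parts of the proposition with a single, uniform padding construction: given any feasible EPS system, I adjoin an independent auxiliary random variable to the ciphertext to increase $H(X)$ by a prescribed amount, \emph{without} increasing the key consumption $I(R;UX)$.

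For non-emptiness, I would start from the one-time pad of Definition~\ref{def:OTP}. By Theorem~\ref{th:su} it realizes $H(X)=\log|\calU|$, so $\calP_0$ is non-empty. To realize an arbitrary $\gamma>0$, I construct an auxiliary random variable $Z$ with $H(Z)=\gamma$, defined on a finite alphabet and chosen to be independent of $(R,U,X)$. Such a $Z$ exists: on the alphabet $\{0,1,\ldots,\lceil 2^\gamma\rceil\}$ the entropy varies continuously from $0$ to at least $\log\lceil 2^\gamma+1\rceil \geq \gamma$ as we vary a single probability mass, so any target value $\gamma$ is attained. Set $X'=(X,Z)$ and keep $R,U$ unchanged.

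The core verification is to check that $(R,U,X')$ is an EPS system with $H(X')=\log|\calU|+\gamma$. Since $Z$ is independent of $(R,U,X)$, I have
$I(U;X')=I(U;X)+I(U;Z\given X)=0$, and
$H(U\given R X')=H(U\given R X Z)\leq H(U\given R X)=0$, while $I(U;R)=0$ is unchanged. Also $H(X')=H(X)+H(Z)=\log|\calU|+\gamma$, so $(R,U,X')\in\calP_\gamma$. This establishes non-emptiness.

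For monotonicity, fix $0\le\gamma_1<\gamma_2$ and any $\epsilon>0$. Choose $P_{RX\given U}\in\calP_{\gamma_1}$ with $I(R;UX)\le f(\gamma_1)+\epsilon$. Construct $Z$ as above with $H(Z)=\gamma_2-\gamma_1$, independent of $(R,U,X)$, and again set $X'=(X,Z)$. The same verification places the padded system in $\calP_{\gamma_2}$. Independence of $Z$ from $(R,U,X)$ also gives
\begin{equation*}
I(R;UX')=I(R;U,X,Z)=I(R;U,X)+I(R;Z\given U,X)=I(R;UX),
\end{equation*}
so $f(\gamma_2)\le I(R;UX')\le f(\gamma_1)+\epsilon$. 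Letting $\epsilon\to 0$ yields $f(\gamma_2)\le f(\gamma_1)$.

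The only non-routine point is the construction of $Z$ with entropy exactly $\gamma$, and this is handled by continuity of the entropy function on any sufficiently large finite simplex; no real obstacle remains. Both claims then follow from the single padding argument above.
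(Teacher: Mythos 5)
Your proof is correct and uses essentially the same padding construction as the paper: adjoin an independent auxiliary variable of entropy $\gamma$ (resp.\ $\gamma_2-\gamma_1$) to the ciphertext of a one-time pad (resp.\ a near-optimal system). The only cosmetic difference is that you conclude monotonicity by computing $I(R;UX')=I(R;UX)$ directly, whereas the paper routes through the identity $I(R;UX)=H(X)-H(X\given UR)$; your version is, if anything, slightly cleaner and also more explicit about why an auxiliary variable with entropy exactly $\gamma$ exists.
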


\begin{proof}
  A non-vacuous feasible set is demonstrated as follows.  Let $(R, U,
  X)$ be a given EPS system. Define a second EPS system $(R', U', X')$
  as follows.  Let $(R', U') = (R, U)$ and $X' = (X, A)$, where $A$ is
  a random variable independent of $(R, U, X)$ such that $H(A) =
  \delta$ for any given $\delta \geq 0$.  In other words, $(R', U',
  X')$ is constructed by adding some spurious randomness into the
  ciphertext of the EPS system $(R, U, X)$.  Setting $\delta = \gamma$
  and supposing that $(R, U, X)$ is a cipher system using a one-time
  pad yields $P_{R'X'|U'} \in \calP_\gamma$.

  By the same trick, we can show that $f(\gamma)$ is non-increasing.
  For any $\gamma > 0$ and $\epsilon > 0$, let $(R, U, X)$ be an EPS
  system such that $P_{RX|U} \in \calP_\gamma$ and
  \begin{equation}
    I(R; UX) < f(\gamma) + \epsilon. \label{eq:minIRUX3}
  \end{equation}
  It is easy to check that $P_{R'X'\given U'} \in \calP_{\gamma+\delta}$ and
  $H(X\given UR) = H(X'\given U'R') - \delta$.  Then
  \begin{align}
    f(\gamma + \delta) &= \inf_{P_{\tR \tX| \tU}\in
      \calP_{\gamma+\delta}} I(\tX; \tU \tR) \\
    &= \inf_{P_{\tR \tX| \tU}\in \calP_{\gamma+\delta}} \left(H(\tX) -
    H(\tX\given \tU \tR )\right) \\
    &= \log |\calU| + \gamma + \delta - \sup_{P_{\tR \tX\given \tU}\in
      \calP_{\gamma+\delta}} H(\tX\given \tU \tR)  \label{eq:minIRUX4}\\
    &\leq \log |U| + \gamma + \delta - H(X'\given U'R')\\
    &= H(X)  - H(X\given UR)\\
    &< f(\gamma) + \epsilon, \label{eq:minIRUX5}
  \end{align}
  where \eqref{eq:minIRUX5} follows from~\eqref{eq:minIRUX3}.  Since
  $\epsilon > 0$ is arbitrary, the second claim of the proposition is
  proved.
\end{proof}

\section{Compression before Encryption \label{se:ComEnc}}
In Section~\ref{se:intro} we discussed the standard approach of
compression-before-encryption (cf. Fig.~\ref{fig:comencry}) suggested by
Shannon. In the following, we will show that this approach is not
necessarily the right way to minimize either $I(R; UX)$ or $H(X)$ in
error free perfect secrecy systems.  For simplicity, all units in this
section are in bits and logarithms are with base $2$.

A central idea in lossless data compression is to encode frequently
occurring symbols (or strings) using shorter codewords. However, this
can cause problems in the context of EPS systems.  For instance,
suppose our cipher consists of a Huffman code followed by a one-time
pad using a key with the same length as the Huffman codeword.  At
first glance, this approach can reduce both the ciphertext size and
the key size to the minimum expected codeword length.  Unfortunately,
this method is not secure because the \emph{length} of the output
discloses some information about the message. Consider an extreme case
that the message is generated according to $P_U(i) = 2^{-i}$ for $1
\leq i < \ell$ and $P_U(\ell) = 2^{-(\ell-1)}$. If a binary Huffman
code is used, the message is uniquely identified by the length when $U
< \ell - 1$.

This problem can be solved by different methods.  One solution has
been discussed in Example~\ref{eg:KeyConsumption}.  In this section,
we only consider the \emph{compress-encrypt-pad scheme} of
Fig.~\ref{fig:PrefixPad}, since this is sufficient to illustrate the
deficiencies of compression before encryption.

In Fig.~\ref{fig:PrefixPad}, a prefix code is used to encode the
message $U$ and a codeword with length $\sigma(U)$ is obtained.  The
codeword is further encrypted by one-time pad using a key with the
same length $\sigma(U)$.  After application of the one-time pad, fair
bits are appended such that the output has a constant length $\gamma$
equal to the longest codeword, $\max_{u \in \calU} \sigma(u)$.  The
receiver decrypts the message by applying the key bit-by-bit to the
ciphertext until a codeword in the prefix code is obtained.

\begin{figure}[htbp]
  \begin{center}
    \includegraphics[scale=0.8]{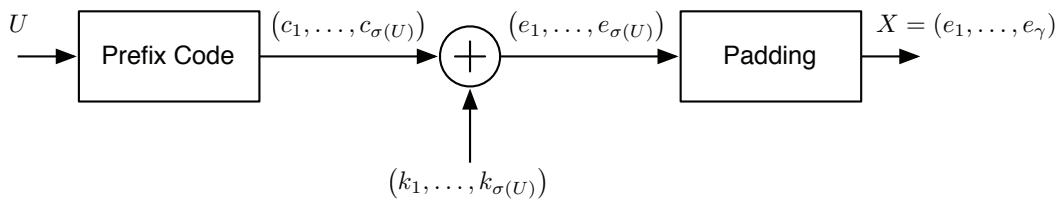}
    \caption{\label{fig:PrefixPad}A compression-encryption-padding scheme.
    }
  \end{center}
\end{figure}

In this scheme, the ciphertext $X$ has a uniform distribution so that
$H(X) = \gamma$.  Since $\gamma$ is the length of the longest codeword
and a prefix code is uniquely decodable, $\gamma \geq \log \ell$,
where $\ell=|\calU|$.  Therefore, $H(X) \geq \log \ell$, in agreement
with Theorem~\ref{th:BoundonXR}.  This scheme requires an initial key
of length $H(R) \geq \log \ell$ bits providing a sufficiently long
secret key in case the longest codeword is the one that happens to be
generated.

Let us now compare the performance of this scheme with the bounds
obtained in Section~\ref{se:sub1MinIRUX}, where the minimal expected
key consumption is assumed.  Suppose the Shannon code~\cite{bk:Cover}
is used in the scheme described in Fig.~\ref{fig:PrefixPad} to
construct an EPS system.  The performance is given in the following
theorem.
\begin{theorem}
  If the Shannon code is used in the compress-encrpyt-pad scheme
  described in Fig.~\ref{fig:PrefixPad} to construct an EPS system,
  then
  \begin{equation}
    H(R) = H(X) = \left\lceil \log \frac{1}{\pi} \right\rceil, \label{eq:ecl-2}
  \end{equation}
  which exceeds the lower bound in~\eqref{eq:minIXR0HXHR2} by no
  more than $1$ bit. Furthermore, the expected key consumption exceeds
  the lower bound~\eqref{eq:LBonEKC} by no more than 1 bit,
  \begin{equation*}
    I(R; UX) \leq H(U) +1.
  \end{equation*}
\end{theorem}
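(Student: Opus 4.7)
The plan is to directly compute each of the three entropy quantities from the construction of the scheme, exploiting the fact that the Shannon codeword for symbol $u$ has length $\sigma(u) = \lceil \log(1/P_U(u)) \rceil$.

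First I would observe that the longest codeword has length $\gamma = \max_u \sigma(u) = \lceil \log(1/\pi) \rceil$, since the ceiling is monotone and $\pi = \min_u P_U(u)$. By the design of the compress-encrypt-pad scheme, the output $X$ is uniform on $\{0,1\}^\gamma$, so $H(X) = \gamma$. Similarly, the initial key $R$ is taken uniform on $\{0,1\}^\gamma$ to cover the worst-case codeword, so $H(R) = \gamma$. Combined with $\log(1/\pi) \le \lceil \log(1/\pi) \rceil < \log(1/\pi) + 1$, this establishes \eqref{eq:ecl-2} and shows the excess over \eqref{eq:minIXR0HXHR2} is strictly less than one bit.

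Second, I would compute the expected key consumption by evaluating $H(R \mid U X)$ pointwise. Given $U=u$ and $X=x$, the first $\sigma(u)$ bits of $x$ are the XOR of the Shannon codeword for $u$ with the first $\sigma(u)$ bits of $R$, so those key bits are recoverable from $(u,x)$. The remaining $\gamma - \sigma(u)$ bits of $R$ are never touched by the encoder and are independent of the fresh padding bits used to extend $x$ to length $\gamma$; hence they remain uniform conditional on $(U,X)$, yielding $H(R \mid U=u, X=x) = \gamma - \sigma(u)$. Averaging gives $H(R \mid U X) = \gamma - \mathbf{E}[\sigma(U)]$, whence $I(R;UX) = H(R) - H(R\mid UX) = \mathbf{E}[\sigma(U)]$. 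The classical Shannon-code estimate $\mathbf{E}[\sigma(U)] = \sum_u P_U(u) \lceil \log(1/P_U(u)) \rceil \le H(U) + 1$ then completes the proof.

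The main subtlety is the conditional-entropy calculation: one must justify that conditioning on the entire output $X$ reveals exactly the $\sigma(U)$ key bits actually used, and nothing more. This hinges on the padding being drawn independently of $R$ and on $X$ having constant length $\gamma$, so that the output length itself carries no information about $\sigma(U)$ that could leak further key bits. Once this independence structure is pinned down, the remaining steps are routine algebra.
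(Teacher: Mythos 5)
Your proposal is correct and follows essentially the same route as the paper: identify the longest Shannon codeword length as $\lceil \log(1/\pi)\rceil$ to get $H(R)=H(X)$, then compute $H(R\given UX)$ as the number of unused/padding bits so that $I(R;UX)=\mathbf{E}[\sigma(U)]\le H(U)+1$. Your explicit justification that the untouched key bits remain uniform given $(U,X)$ is a slightly more careful rendering of the step the paper states in one line, but the argument is the same.
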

\begin{proof}
  Recall that $\sigma(u)$ is the length of the codeword assigned to $U
  = u$.  Then the longest codeword has length equal to
  \begin{align}
    \left\lceil \log \frac{1}{\pi} \right\rceil,
  \end{align}
  where $\pi = \min_{u\in\calU } P_U(u)$.  Recall in
  Fig.~\ref{fig:PrefixPad} that fair bits are appended to each
  codeword to construct a constant length ciphertext $X$.  Therefore,
  $H(R) = H(X) = \left\lceil \log \frac{1}{\pi} \right\rceil$ which is
  within one bit of the lower bound in \eqref{eq:minIXR0HXHR2}.
  Furthermore, the expected key consumption
  \begin{align}
    I(R; UX)
    &= H(R) - H(R\given UX)  \label{eq:ecl-1}\\
    &= \left\lceil \log \frac{1}{\pi} \right\rceil - \left(\left\lceil \log \frac{1}{\pi} \right\rceil - \sum_{u \in \calU} P_U(u) \sigma(u) \right)  \label{eq:ecl0}\\
    &= \sum_{u \in \calU} P_U(u) \sigma(u)   \label{eq:ecl}\\
    &\leq H(U) +1, \label{eq:ecl2}
  \end{align}
  where~\eqref{eq:ecl0} follows from the fact that $H(R\given UX)$ is
  equal to the number of appended fair bits, and~\eqref{eq:ecl2}
  follows from \cite[(5.29)--(5.32)]{bk:Cover}.  Therefore, $I(R; UX)$
  is also within a bit of the lower bound in \eqref{eq:LBonEKC}.
\end{proof}
Therefore, we conclude that if the Shannon code is used for
compression in Fig.~\ref{fig:PrefixPad}, then the performance is close
to the optimal code in the minimal key consumption regime when both
$H(U) \gg 1$ and $ \log \frac{1}{\pi} \gg 1$.

Now, we compare the performance obtained when the Huffman code is used
in place of the Shannon code.  In this case, the expected key
consumption $I(R; UX)$ can again be analyzed similar
to~\eqref{eq:ecl-1} -- \eqref{eq:ecl2}.  Since the expected codeword
length in~\eqref{eq:ecl} is shorter for the Huffman code, a smaller
$I(R; UX)$ can be obtained.  However, the longest codeword in the
Huffman code can be longer than the longest codeword in the Shannon
code.  As a consequence, larger $H(X)$ and $H(R)$ are required for
certain $P_U$.  This can be seen in the example in
Table~\ref{table:1}.  In the worst case, the longest codeword in the
Huffman code can be as much as $44\%$ longer than the longest codeword
in the Shannon code~\cite{abu2000maximal}.  Furthermore, the partition
code $\calC(\Psi)$ in Table~\ref{table:1} outperforms the compression
before encryption schemes based on either the Huffman code or the
Shannon code because $\calC(\Psi)$ is optimal according to
Theorem~\ref{th:optimalPart}.  On the other hand, the Shannon code
uses unnecessarily long codewords for certain source distributions,
e.g., $P_U = (0.9, 0.1)$.  As a consequence, larger $H(X)$ is needed
as shown in Table~\ref{table:2}.  However, the minimal $I(R; UX)$ or
the minimal $H(X)$ can be obtained using different partition codes.
We conclude that compression before encryption is a suboptimal
strategy to minimize key consumption or the number of channel uses in
EPS systems.

\begin{table}[htbp] \center
\caption{Comparing different schemes with $\Phi = (1, 1, 1, 3, 4, 7, 11)$ and $P_U(i) = {\Phi(i)}/{28}$ for $1 \leq i \leq 7$}
\label{table:1}
{
\begin{tabular}{|c|c|c|c|}
\hline
 & Huffman & Shannon & Partition $\calC(\Phi)$\\
\hline
\hline
$I(R; UX)$ & $2.357$ & $2.679$ & $2.291 = H(U)$
\\
\hline
$H(X)$  & $6$ & $5$ & $5$\\
\hline
\end{tabular}
}
\end{table}
\begin{table}[htbp] \center
\caption{Comparing different schemes with $\Phi = (9, 1)$, $\Phi' = (1, 1)$ and $P_U = (0.9, 0.1)$}
\label{table:2}
{
\begin{tabular}{|c|c|c|c|c|}
\hline
 & Huffman & Shannon & Partition $\calC(\Phi)$ & Partition $\calC(\Phi')$\\
\hline
\hline
$I(R; UX)$ & $1$ & $1.3$ & $0.469 = H(U)$ & $1$
\\
\hline
$H(X)$  & $1$ & $4$ & $4$ & $1$\\
\hline
\end{tabular}
}
\end{table}

Suppose now that the source distribution is $d$-adic and the smallest
probability mass in $P_U$ is equal to $d^{-{k}}$ for certain integers
$d$ and ${k}$.  What were binary digits in the scheme described above in
Fig.~\ref{fig:PrefixPad} now become $d$-ary symbols.  It can be
verified that the longest codeword has length equal to ${k}$.
Therefore, both $d$-ary Shannon codes and $d$-ary Huffman codes can
achieve the minimal $H(X)$ and $H(R)$ in~\eqref{eq:minIXR0HXHR2}.
Furthermore, the expected codeword length is equal to $H(U)$.
By~\eqref{eq:ecl}, $I(R; UX)$ is equal to the expected codeword
length, which is equal to $H(U)$. Therefore, the minimal $I(R; UX)$ is
achieved.  However, a prefix code cannot achieve the expected codeword
length $H(U)$ when $P_U$ is not
$d$-adic~\cite[Theorem~4.6]{bk:Raymond}.  Again consider the example
in Table~\ref{table:2} where $P_U = (0.9, 0.1)$.  Only partition code
but neither the Shannon nor the Huffman code can be used to achieve
$I(R; UX) = H(U)$.  Indeed, the $d$-adic distribution is just a
special case of the condition used in Theorem~\ref{th:optimalPart}.
Therefore, the partition code can achieve the minimal $I(R; UX)$ for a
wider range of $P_U$.

\section{Proof of Theorem~\ref{th:supportXR}}
\label{se:supportXR}
Suppose there exists $u \in \calU$ such that $P_{U}(u)$ is
irrational. Define a new random variable $U^{*}$ such that
\begin{equation*}
  U^{*} =
  \begin{cases}
    0 & \text{ if } U = u \\
    1 & \text{ otherwise.}
  \end{cases}
\end{equation*}
Then $P_{U^*}(0)$ and $P_{U^*}(1)$ are irrational.  As $U^{*}$ is a
function of $U$, by~\eqref{eq:IUX0} -- \eqref{eq:IUR0}
and~\eqref{eq:IXR0},
\begin{align}
  I(U^{*};R) = I(U^{*};X) = I(X;R) = H(U^{*}\given XR) = 0. \label{eq:4.3}
\end{align}
Therefore, it suffices to consider binary $U$.

Let $\calX$ and $\calR$ be the respective supports of $X$ and $R$.
Suppose to the contrary first that $|\calX| $ and $|\calR|$ are both
finite.  We can assume without loss of generality that
\begin{align}
  \calX & = \{1,\ldots, n\}\\
  \calR & = \{1,\ldots, m\}.
\end{align}
Let
\begin{align}
  x_{i} &=  P_{X}(i), \quad  i=1,\ldots, n\\
  r_{j} &= P_{R}(j), \quad j=1,\ldots, m,
\end{align}
and let $\bf x$ be the $n$-row vector with entries $x_{i}$. Similarly,
define the column vector $\bf r$.

As $X$ and $R$ are independent and $H(U\given XR)=0$, there exists a
function $g$ such that $U= g(X,R)$. Hence, from $X$ and $R$ we induce
a $n \times m$ \emph{decoding matrix} $G$ with entries
\begin{equation*}
  G_{i,j}=f(i,j),\quad i = 1,\ldots, n,\, j=1,\ldots, m.
\end{equation*}
Then
\begin{align}
   \sum_{j=1}^{m}  G_{i,j} r_{j} &= P_{U}(1),  \quad
  i=1,\ldots, n \label{eq:G1} \\
  \sum_{i=1}^{n}x_{i} &= \sum_{j=1}^{m}r_{j} = 1 \label{eq:G2}\\
  x_{i} &\geq 0, r_j \geq 0,\quad i=1,\dots,n,\,j=1,\dots,m \label{eq:G3}\\
  \sum_{i=1}^{m} x_{i} G_{i,j} &= P_{U}(1), \quad j=1,\dots,m\label{eq:G4}
\end{align}
Here,~\eqref{eq:G1} is due to the fact that $I(U;X)=0$,~\eqref{eq:G2}
and~\eqref{eq:G3} are required since $P_{X}$ and $P_{R}$ are
probability distributions, and~\eqref{eq:G4} follows from $I(U;R) =
0$.

In fact, for any $\bf x$, $\bf r$ and binary matrix $G$ satisfying the
above four conditions, one can construct random variables $\{U,R,X\}$
such that
\begin{align}
  I(U ;R) = I(U ;X) = I(X;R) = H(U \given XR)  = 0
\end{align}
where $U=f(X,R)$ and the probability distributions of $X$ and $R$ are
specified by the vectors $\bf x$ and $\bf r$ respectively.

In the following, we will prove that if the rows of $G$ are not
independent, then we can construct another random variable $X^{*}$
with support $\calX^*$,  $|{\calX^*}| < |\calX|$
such that
\begin{align}
  I(U ;R) = I(U ;X^{*}) = I(X^{*};R) = H(U\given X^{*}R) = 0.
\end{align}

\def\A{{\calA}} \def\B{{\calB}}

To prove this claim, suppose that there exists disjoint subsets $\A$
and $\B$ of $\{1,\ldots, n\}$ and positive numbers $\alpha_{i}, i\in\A
\cup\B $ such that
\begin{align}\label{eq:88}
\sum_{i\in\A} \alpha_{i}G_{i} = \sum_{k\in\B} \alpha_{k}G_{k}.
\end{align}
where $G_{i}$ is row $i$ of $G$.  Then we will claim that
\begin{equation*}
  \sum_{i\in\A} \alpha_{i} = \sum_{k\in\B}\alpha_{k}.
\end{equation*}
Multiplying both sides of~\eqref{eq:88} by ${\bf r}$,
\begin{align}
  \sum_{i\in\A} \alpha_{i}G_{i} {\bf r}  & = \sum_{k\in\B} \alpha_{k}G_{k}{\bf r} \\
  \sum_{i\in\A} \alpha_{i}P_{U}(1)  & = \sum_{k\in\B} \alpha_{k}P_{U}(1)\\
  \sum_{i\in\A} \alpha_{i}   & = \sum_{k\in\B} \alpha_{k}.
\end{align}
Let $\epsilon \triangleq \min_{i \in \A \cup \B}
{x_{i}}/{\alpha_{i}}$.  Assume without loss of generality that $n\in
\A$ and that $\epsilon = {x_{n}}/{\alpha_{n}}$.
Define
\begin{equation*}
  x_{i}^{*} =
  \begin{cases}
    x_{i} - \epsilon \alpha_{i}, & i\in\A \\
    x_{i} + \epsilon \alpha_{i}, & i\in\B \\
    x_{i},& \text{ otherwise. }
  \end{cases}
\end{equation*}
Note that $x^{*}_{n}=0$. Suppose that the probability distribution of
$X$ is changed such that $P_{X}(i) = x^{*}_{i}$.  Then it can be
checked easily that $U,X,R$ still satisfy~\eqref{eq:IUX0} --
\eqref{eq:IUR0} and~\eqref{eq:IXR0}.  Furthermore, the size of the
support of $P_{X}$ is $|\calX| \le n-1 $.

Repeating this procedure, we can prove that for any random variable
$U$, if there exists auxiliary random variables $X,R$
satisfying~\eqref{eq:IUX0} -- \eqref{eq:IUR0} and~\eqref{eq:IXR0},
then there exists  auxiliary random variables
$X^{*},R^{*}$ such that~\eqref{eq:4.3} is satisfied and the rows
and columns of the decoding matrix induced by $X^{*}$ and $R^{*}$ are
all linearly independent.  Hence, the decoding matrix $G$ induced by
$X^{*}$ and $R^{*}$ must be square (and thus $m=n$).  Consequently,
\begin{align}
  \sum_{i=1}^{n} x_{i} G_{i,j} & = P_{U}(1) ,\quad j=1,\ldots, n .
\end{align}
There exists a unique solution $(z_{1},\ldots, z_{n})$
such that
\begin{align}
  \sum_{i=1}^{n} z_{i} G_{i,j} & = 1 ,\quad j=1,\ldots, n.
\end{align}
Clearly, $z_{i}= x_{i} / P_{U}(1)$.  As all the entries in $G$ are
either $0$ or $1$, all the $z_{i}$ are rational numbers.  Therefore,
\begin{align}
1 & = \sum_{i=1}^{n}  x_{i}   = P_{U}(1) \sum_{i=1}^{n}  z_{i}.
\end{align}
Hence, $P_{U}(1)$ must be rational and a contradiction occurs.  We
have proved that $\calX$ and $\calR$ cannot be both finite.  The case
when only $\calX$ or $\calR$ is finite can be similarly proved.

\section{Conclusion}
This paper studied perfect secrecy systems with zero decoding error at
the receiver, with the additional assumption that the message $U$ and the
secret key $R$ are independent, $I(U; R) = 0$.  Under this setup, we found
a new bound $\log |\calU| \leq H(R)$ on the key requirement, improving
on Shannon's fundamental bound $H(U) \leq H(R)$ for perfect secrecy.

To transmit the ciphertext $X$, the lower bound on the minimum number
of channel uses has been shown to be $\log |\calU| \leq H(X)$.  If the
source distribution is defined on a countably infinite support or a
support with unbounded size, no security system can simultaneously
achieve perfect secrecy and zero decoding error.

We also defined and justified three new concepts: \emph{residual
  secret randomness}, \emph{expected key consumption}, and
\emph{excess key consumption}.  We have demonstrated the feasibility
of extracting residual secret randomness in multi-round secure
communications which use a sequence of error free perfect secrecy
systems.  We quantified the residual secret randomness as $H(R|UX)$.  We further distinguished between the size $H(R)$ of the secret
key required prior to the commencement of transmission, and the
expected key consumption $I(R; UX)$ in a multi-round setting.  In
contrast to $H(R)\geq \log|\calU|$, we showed that $I(R; UX)$ is lower
bounded by $H(U)$, giving a more precise understanding about the role
of source entropy in error free perfect secrecy systems.  The excess
key consumption is quantified as $I(R; X)$, and is equal to $0$ if
and only if the minimal expected key consumption is achieved.

One of the main objectives of this paper was to reveal the fundamental
tradeoff between expected key consumption and the number of channel
uses.  For the regime where the minimal $I(R; UX)$ is assumed, $H(X)$
and $H(R)$ are inevitably larger and corresponding lower bounds for
$H(X)$ and $H(R)$ have been obtained.  If the source distribution
$P_U$ has irrational numbers, the additional requirements on the
alphabet sizes of $X$ and $R$ to achieve minimal $I(R; UX)$ have been
shown.  We have proposed a new code, the \emph{partition code}, which
generalizes the one-time pad, and can achieve minimal $I(R; UX)$ when
all the probability masses in $P_U$ are rational.  In some cases, the
partition code can simultaneously attain the minimal $H(X)$ and $H(R)$
in this regime.

At the other extreme, the regime where the minimal number of channel
uses is assumed, the one-time pad has been shown to be optimal.  For
the intermediate regime, we have formulated an optimization problem
for the fundamental tradeoff between $I(R; UX)$ and $H(X)$.  We also
demonstrated that compression before encryption cannot minimize either
$H(R)$, $H(X)$ or $I(R; UX)$.

This paper has highlighted a few open problems.  First, the complete
characterization of the tradeoff between $I(R; UX)$ and $H(X)$ remains
open.  Second, the partition code is only one class of codes designed
to minimize expected key consumption. Codes achieving other points on
the tradeoff curve are yet to be discovered.  In particular, a code
achieving minimal $H(X)$ and $H(R)$ in the regime of minimal expected
key consumption is important for the design of efficient and secure
systems.

%

\end{document}